\definecolor{ForestGreen}{HTML}{005F50} 
\numberwithin{equation}{section}  
\setlist[enumerate,1]{label=(\roman*), ref=(\roman*)}  
\theoremstyle{plain}
	\newtheorem{theorem}             {Theorem}[section]
	\newtheorem{corollary}  [theorem]{Corollary}
	\newtheorem{lemma}      [theorem]{Lemma}
	\newtheorem{proposition}[theorem]{Proposition}
\theoremstyle{definition}
	\newtheorem{definition} [theorem]{Definition}
	\newtheorem{example}    [theorem]{Example}
\theoremstyle{remark}
	\newtheorem{remark}     [theorem]{Remark}
\DeclareMathOperator{\E}{{\mathds E}}
\DeclareMathOperator{\AVaR}{{\mathsf{AV@R}}}
\DeclareMathOperator{\VaR}{{\mathsf{V@R}}}
\DeclareMathOperator{\sign}{sign}		
\DeclareMathOperator*{\essinf}{ess\,inf}
\DeclareMathOperator*{\esssup}{ess\,sup}
\DeclareMathOperator*{\argmin}{{arg\,min}}
\newcommand{\one}{{\mathds 1}} 		
\definecolor{TUC}{RGB}{0, 95, 80}
\begin{document}
\title{Higher order measures of risk and\\  stochastic dominance}
	\author{Alois Pichler\thanks{Technische Universität Chemnitz, Faculty of Mathematics. 90126 Chemnitz, Germany\protect \\
    Contact: \protect\href{mailto:alois.pichler@math.tu-chemnitz.de}{alois.pichler@math.tu-chemnitz.de}
    \orcidlink{0000-0001-8876-2429} \protect\href{https://orcid.org/0000-0001-8876-2429}{orcid.org/0000-0001-8876-2429}}}
	\maketitle
  \begin{abstract}
    Higher order risk measures are stochastic optimization problems by design, and for this reason they enjoy valuable properties in optimization under uncertainties.
    They nicely integrate with stochastic optimization problems, as has been observed by the intriguing concept of the risk quadrangles, for example.

    Stochastic dominance is a binary relation for random variables to compare random outcomes.
    It is demonstrated that the concepts of higher order risk measures and stochastic dominance are equivalent, they can be employed to characterize the other.
    The paper explores these relations and connects stochastic orders, higher order risk measures and the risk quadrangle.

    Expectiles are employed to exemplify the relations obtained.\medskip

    \noindent\textbf{Keywords:} Higher order risk measure~· higher order stochastic dominance~·  risk quadrangle

    \noindent\textbf{Classification:} 62G05, 62G08, 62G20
  \end{abstract}

\section{Introduction}
  Risk measures are considered in various disciplines to assess and quantify risk. Similarly to assigning a premium to an insurance contract with random losses after appraising its risk, risk measures assign a number to a random variable, which itself has stochastic outcomes.

  This paper focuses on higher order risk measures, as these risk measures naturally combine with stochastic optimization problems or in ‘learning’ objectives, as they are given as the result of optimization problems.
  In addition, these risk measures relate to the risk quadrangle.

  The paper derives explicit representations of higher order risk measures for general, elementary risk measures in a first main result.
  These characterizations are employed to characterize stochastic dominance relations, which are built on general norms. The second main result is a verification theorem. This is a characterization of higher order stochastic dominance relations, which is numerically tractable.

  For Hölder norms, stochastic dominance relations have been considered for example in \citet{Dupacova2014, Kopa2016, Kopa2023} and \citet{ConsigliDentchevaMaggioni}, in portfolio optimization involving commodities (cf.\ \citet{PichlerWestgaard}), and by \citet{Dentcheva2011} and \citet{MaggioniPflug, MaggioniPflug2016} in a multistage setting.
  The paper employs the characterizations obtained to establish relations for general norms.
  A comparison of these methods is given in \citet{GutjahrPi2013}.
  The paper illustrates these connections for expectiles (\citet{Bellini2016,Bellini2007}) and adds a comparison with other risk measures.

  \paragraph{Outline of the paper.}
	The following Section~\ref{sec:Framework} recalls the mathematical framework for higher order risk measures.
  Section~\ref{sec:Spectral} addresses the higher order risk measure associated with the spectral risks, as these risk measures constitute an elementary building block for general risk measures.
  This section develops the first main result, which is an explicit representation of a spectral risk’s higher order risk measure.
  As a special case, the subsequent Section~\ref{sec:StochasticDominance} links and relates stochastic dominance and higher order risk measures. This section presents the second main result, which allows verifying a stochastic dominance relation by involving only finitely many risk levels.
  The final Section~\ref{sec:Expectile} addresses the expectile and establishes the relations of the preceding sections for this specific risk measure.
  Section~\ref{sec:Summary} concludes.

\section{Mathematical Framework}\label{sec:Framework}
Higher order risk measures are a special instance of \emph{risk measures}, often also termed \emph{risk functionals}.
To introduce and recall their main properties we consider a space~$\mathcal{Y}$ of $\mathbb{R}$\nobreakdash-valued random variables on a probability space with measure~$P$ containing at least all bounded random variables, that is, $L^\infty(P)\subseteq\mathcal{Y}$.
A risk measure then satisfies the following axioms, originally introduced by \citet{Artzner1999}.
\begin{definition}[Risk functional]\label{def:Risk}
  Let~$\mathcal{Y}$ be a space of $\mathbb{R}$\nobreakdash-valued random variables on a probability space $(\Omega, \Sigma, P)$. A mapping $\mathcal{R}\colon\mathcal{Y}\to\mathbb{R}$ is
  \begin{enumerate}[noitemsep,nolistsep]
    \item\label{enu:1}monotone, if $\mathcal{R}(X)\le\mathcal{R}(Y) $, provided that $X\le Y$ almost everywhere;
    \item\label{enu:2}positively homogeneous, if $\mathcal{R}(\lambda\,Y)=\lambda\,\mathcal{R}(Y)$ for all $\lambda>0$;
    \item\label{enu:3}translation equivariant, if $\mathcal{R}(c+Y)=c+\mathcal{R}(Y)$ for all $c\in\mathbb R$;
    \item\label{enu:4}subadditive, if $\mathcal{R}(X+Y)\le \mathcal{R}(X)+\mathcal{R}(Y)$ for all $X$ and $Y\in \mathcal{Y}$.
  \end{enumerate}
  A mapping satisfying~\ref{enu:1}–\ref{enu:4} is called a \emph{risk functional}, or a \emph{risk measure}.
\end{definition}
The risk quadrangle (cf.\ \citet{Rockafellar2013}) interconnects risk measures with the regular measure of deviation, error and regret by
\begin{equation}
	\mathcal{R}(Y)= \inf_{c\in\mathbb{R}}\ c+ \mathcal{V}(Y-c), \label{eq:1}
\end{equation}
where $\mathcal{V}$ is called \emph{regret function} (or \emph{optimized certainty equivalent}, cf.\ \citet{bentalteboulle}).
It follows from the relation~\eqref{eq:1} that $\mathcal{R}$ --~if given as in~\eqref{eq:1}~-- is translation equivariant, i.e, $\mathcal{R}$~satisfies $\mathcal{R}(Y+c)= c+ \mathcal{R}(Y)$ for any $c\in\mathbb{R}$ (cf.~\ref{enu:3} above).
In an economic interpretation, the amount~$c$ in~\eqref{eq:1} corresponds to an amount of cash spent today, while the remaining quantity $Y-c$ is invested and consumed later, thus subject to~$\mathcal{V}$.

The risk functional~$\mathcal{R}$ is positively homogeneous, if the regret function~$\mathcal{V}$ is positively homogeneous. If~$\mathcal{V}$ is not positively homogeneous, then one may consider the positively homogeneous envelope
\[	\mathcal{V}_{ \tilde\beta}(Y)=\inf_{t>0}\ t\left( \tilde\beta+\mathcal{V}\left({Y \over t}\right)\right),
\]
where $\tilde\beta \ge 0$ is a risk aversion coefficient. The combined functional
\begin{align}
	\mathcal{R}_\beta(Y) & =\inf_{c\in\mathbb{R}}c+\mathcal{V}_\beta(Y-c)\nonumber \\
 & =\inf_{\substack{t>0\\ q\in\mathbb{R}}}\ t\left(  \tilde\beta+q+\mathcal{V}\left(\frac{Y}{t}-q\right)\right)\label{eq:2}
\end{align}
is positively homogeneous and translation equivariant (cf.~\ref{enu:2} and~\ref{enu:3}).
The \emph{$\varphi$\nobreakdash-divergence risk measure} is an explicit example of a risk measure, which is defined exactly as~\eqref{eq:2}, cf.\ \citet{DommelPichler}.
\medskip

In what follows, we shall address the reverse question first.
That is, given the risk functional~$\mathcal{R}$, what is the regret functional~$\mathcal{V}$ so that~\eqref{eq:1} holds true?
To this end consider a space  $\mathcal{Y}\subset L^1(P)$ endowed with norm $\|\cdot\|$.
We shall assume the norm to be monotone, that is, $\|X\|\le\|Y\|$ provided that $0\le X\le Y$ almost everywhere.
We associate the following family of risk measure with a given norm.
\begin{definition}[Higher order risk measure]
	Let  $\|\cdot\|$ be a monotone norm on $\mathcal{Y}\subset L^1(P)$ with $\E\one=1$, where $\one(\cdot)=1$ is the identically one function on~$\mathcal{Y}$.
  The \emph{higher order risk measure} at risk level $\beta\in[0,1)$ associated with the norm $\|\cdot\|$ is
  \begin{equation}\label{eq:3}
    \mathcal{R}_\beta^{\|\cdot\|}(Y)=\inf_{t\in\mathbb{R}}\ t+\frac{1}{1-\beta}\|(Y-t)_+\|,
  \end{equation}
  where $\beta\in[0,1)$ is the \emph{risk aversion coefficient} and $x_+\coloneqq\max(0,x)$.

  We shall also omit the superscript and write $\mathcal{R}_\beta$ instead of $\mathcal{R}_\beta^{\|\cdot\|}$ in case the norm is unambiguous given the context.
  We shall demonstrate first that the higher order risk measure is well-defined for any $\beta\ge0$.
\end{definition}
\begin{proposition}
  Let $(\mathcal{Y},\|\cdot\|)$ be a normed space of random variables.
  For the functional~$\mathcal{R}_\beta$ defined in~\eqref{eq:3} it holds that
  \begin{equation}\label{eq:13-4}
    -\|Y\|\le \mathcal{R}_\beta(Y)\le {1 \over 1-\beta} \|Y\|,
  \end{equation}
  so that $\mathcal{R}_\beta(\cdot)$ is indeed well-defined on $(\mathcal{Y},\,\|\cdot\|)$ for every $\beta\in[0,1)$.
\end{proposition}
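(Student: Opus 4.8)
The plan is to handle the two inequalities in~\eqref{eq:13-4} separately, using that $\mathcal{R}_\beta(Y)$ is an infimum over $t\in\mathbb{R}$. The upper bound needs only a single feasible point: choosing $t=0$ in~\eqref{eq:3} gives $\mathcal{R}_\beta(Y)\le\frac{1}{1-\beta}\|Y_+\|$, and since $0\le Y_+\le|Y|$ pointwise, monotonicity of the norm (together with absoluteness $\|\,|Y|\,\|=\|Y\|$, which a monotone norm enjoys) yields $\|Y_+\|\le\|Y\|$ and hence $\mathcal{R}_\beta(Y)\le\frac{1}{1-\beta}\|Y\|$. This direction is essentially immediate.

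The substance is the lower bound, where I must show $t+\frac{1}{1-\beta}\|(Y-t)_+\|\ge-\|Y\|$ for \emph{every} $t$, i.e.\ that the infimum cannot drift to $-\infty$. I would split on the sign of $-t-\|Y\|$. If $t\ge-\|Y\|$, then nonnegativity of the norm gives $t+\frac{1}{1-\beta}\|(Y-t)_+\|\ge t\ge-\|Y\|$. The decisive case is $t<-\|Y\|\le 0$, for which I would start from the pointwise inequality
\[
  (Y-t)_+ + |Y| \ \ge\ (Y-t)+(-Y) \ =\ (-t)\,\one ,
\]
which holds because $(Y-t)_+\ge Y-t$ and $|Y|\ge-Y$, all three quantities being nonnegative. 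Passing to norms via monotonicity, subadditivity and absoluteness gives
\[
  (-t)\,\|\one\| \ \le\ \|(Y-t)_+\|+\|Y\|,
\]
so that $\|(Y-t)_+\|\ge-t-\|Y\|>0$ once the normalization $\|\one\|=1$ is used. Finally $\frac{1}{1-\beta}\ge1$ upgrades this to $t+\frac{1}{1-\beta}\|(Y-t)_+\|\ge t+(-t-\|Y\|)=-\|Y\|$, completing the estimate.

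The step I expect to be the crux is exactly this last case: the normalization $\|\one\|=1$ (equivalently $\|\one\|\ge1$) is what forbids the objective from escaping to $-\infty$ as $t\to-\infty$---for a rescaled norm such as $\|Z\|=\tfrac12\E|Z|$, where $\|\one\|=\tfrac12<1$, one checks that $\mathcal{R}_\beta\equiv-\infty$ for small $\beta$, so the bound genuinely depends on this normalization---while the absoluteness/monotonicity of $\|\cdot\|$ and the factor $\frac{1}{1-\beta}\ge1$ are the remaining elementary inputs. Once both inequalities of~\eqref{eq:13-4} are in hand, $\mathcal{R}_\beta(Y)$ is a finite real number, which is precisely the asserted well-definedness of $\mathcal{R}_\beta$ on $(\mathcal{Y},\|\cdot\|)$.
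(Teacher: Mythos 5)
Your proof is correct, and its core estimate is the same as the paper's: the pointwise bound $(-t)\,\one\le (Y-t)_+ + |Y|$ combined with monotonicity, the triangle inequality and the normalization $\|\one\|=1$ to get $-t\le \|(Y-t)_+\|+\|Y\|$. Where you differ is the case split: the paper treats $t\le 0$ with this estimate and then handles $t\ge 0$ by separately proving that $t\mapsto t+\|(Y-t)_+\|$ is non-decreasing (via the reverse triangle inequality), concluding through $\mathcal R_0\le\mathcal R_\beta$. You instead split at $t=-\|Y\|$ and dispose of the case $t\ge-\|Y\|$ by the trivial bound $t+\frac{1}{1-\beta}\|(Y-t)_+\|\ge t\ge-\|Y\|$, which makes the monotonicity lemma unnecessary; the paper's extra lemma is not wasted, though, since that monotonicity is structural information reused implicitly elsewhere. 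Your observation that the bound genuinely hinges on $\|\one\|\ge 1$ (with the rescaled-$L^1$ counterexample) is a worthwhile addition that the paper does not spell out.
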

\begin{proof}
	The upper bound follows trivially from the definition by choosing $t=0$ in the defining equation~\eqref{eq:3}.

  For $t\le0$, it holds that $-t=-Y+(Y-t)\le -Y+ (Y-t)_+$. It follows from the triangle inequality that $-t\le\|Y\|+\|(Y-t)_+\|$ and thus
  \begin{equation}
    -\|Y\|\le t+\|(Y-t)_+\|\quad\text{for all }t\le0.
  \end{equation}
  To establish the relation also for $t\ge 0$, we start by observing the following monotonicity property of the objective in~\eqref{eq:3} in addition: for $\Delta t\ge0$, it follows from the reverse triangle inequality that
  \[	\|Y_+\|-\|(Y-\Delta t)_+ \|\le\|Y_+-(Y-\Delta t)_+\|\le\|\Delta t\,\one\|= \Delta t,
  \]
  where we have used that $0\le Y_+-(Y-\Delta t)_+\le\Delta t$ together with monotonicity of the norm.
  Replacing~$Y$ by $Y-t$ in the latter display gives
  \[	t  +\|(Y-t)_+\|\le t+\Delta t+\|(Y-(t+\Delta t))_+\|;
  \]
  that is, the function $t\mapsto t+\|(Y-t)_+\|$ is non-decreasing, which finally establishes that
  \begin{equation}
    -\|Y\|\le t+\|(Y-t)_+\|\quad\text{for all }t\in\mathbb R.
  \end{equation}
  The lower bound in~\eqref{eq:13-4} thus follows from the latter inequality, as $\mathcal R_0(Y)\le \mathcal R_\beta(Y)$ for any $\beta\ge0$.
\end{proof}
For Hölder spaces (i.e., $L^p(P)$ spaces with $p\ge1$ and norm $\|Y\|_p\coloneqq(\E|Y|^p)^{\nicefrac1 p}$), the higher order risk measure has been introduced in \citet{Krokhmal2007} and studied in \citet{Dentcheva2010}.
\begin{lemma}
  $\mathcal{R}_\beta(\cdot)$ is a risk functional, provided that the norm is monotone. Further, $\mathcal{R}_\beta$ is Lipschitz continuous with respect to the norm, the Lipschitz constant is ${1 \over 1-\beta}$.
\end{lemma}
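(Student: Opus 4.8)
The plan is to verify the four defining axioms of a risk functional in turn, each reducing to an elementary pointwise inequality for the map $x\mapsto(x-t)_+$ combined with either monotonicity of the norm, the triangle inequality, or the variational (infimum) structure of~\eqref{eq:3}; the Lipschitz estimate then follows from the same circle of ideas, and finiteness is already guaranteed by the preceding proposition.

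For \emph{monotonicity}~\ref{enu:1}, I would observe that $x\mapsto(x-t)_+$ is non-decreasing, so $X\le Y$ almost everywhere gives $0\le(X-t)_+\le(Y-t)_+$; monotonicity of the norm yields $\|(X-t)_+\|\le\|(Y-t)_+\|$ for every~$t$, and passing to the infimum gives $\mathcal R_\beta(X)\le\mathcal R_\beta(Y)$. For \emph{translation equivariance}~\ref{enu:3}, the substitution $t\mapsto t+c$ in~\eqref{eq:3} shows $\mathcal R_\beta(Y+c)=c+\mathcal R_\beta(Y)$, since $(Y+c-t)_+=(Y-(t-c))_+$. For \emph{positive homogeneity}~\ref{enu:2}, the substitution $t\mapsto\lambda t$ together with $(\lambda Y-\lambda t)_+=\lambda(Y-t)_+$ and positive homogeneity of the norm factors out $\lambda>0$. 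For \emph{subadditivity}~\ref{enu:4}, I would use $(a+b)_+\le a_+ + b_+$: for fixed $s,t$, choosing the competitor $r=s+t$ in the infimum defining $\mathcal R_\beta(X+Y)$ gives, via monotonicity of the norm and the triangle inequality,
\[
  \|(X+Y-(s+t))_+\|\le\|(X-s)_+\|+\|(Y-t)_+\|,
\]
so that the objective at $r=s+t$ is bounded by the sum of the two objectives; taking the infimum over $s$ and $t$ \emph{separately} then yields $\mathcal R_\beta(X+Y)\le\mathcal R_\beta(X)+\mathcal R_\beta(Y)$.

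For the \emph{Lipschitz} bound I would start from the pointwise inequality $(X-t)_+\le(Y-t)_+ + (X-Y)_+$, again an instance of $(a+b)_+\le a_+ + b_+$. Since all three terms are non-negative, monotonicity of the norm and the triangle inequality give $\|(X-t)_+\|\le\|(Y-t)_+\|+\|(X-Y)_+\|$, whence the objective in~\eqref{eq:3} for $X$ exceeds that for $Y$ by at most $\frac{1}{1-\beta}\|(X-Y)_+\|$ uniformly in~$t$. Passing to the infimum gives $\mathcal R_\beta(X)-\mathcal R_\beta(Y)\le\frac{1}{1-\beta}\|(X-Y)_+\|\le\frac{1}{1-\beta}\|X-Y\|$, and interchanging the roles of $X$ and $Y$ (using $\|Y-X\|=\|X-Y\|$) produces the matching lower bound, so that $|\mathcal R_\beta(X)-\mathcal R_\beta(Y)|\le\frac{1}{1-\beta}\|X-Y\|$.

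I expect the two genuinely load-bearing points to be, first, the correct handling of the \emph{double} infimum in subadditivity (one must check that the bound holds for the single competitor $r=s+t$ before decoupling the two infima), and second, the step $\|(X-Y)_+\|\le\|X-Y\|$ in the Lipschitz estimate. The latter is where the monotonicity hypothesis is essential: it requires that the norm control $\|Z_+\|$ by $\|Z\|$ through $0\le Z_+\le|Z|$, i.e.\ the absolute (lattice) property that accompanies the monotonicity assumption. Everything else is a routine consequence of the variational form of~\eqref{eq:3} and the monotonicity established in the preceding proposition.
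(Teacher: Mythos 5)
Your proposal is correct and follows essentially the same route as the paper: the paper dismisses axioms (ii)--(iv) as straightforward, notes that (i) needs the monotone norm, and obtains the Lipschitz bound from subadditivity together with the estimate $\mathcal R_\beta(Y-Z)\le\frac{1}{1-\beta}\|Y-Z\|$ from the preceding proposition, which is exactly the content of your pointwise inequality $(X-t)_+\le(Y-t)_+ +(X-Y)_+$ evaluated at the competitor $t=0$. The subtlety you flag, namely $\|(X-Y)_+\|\le\|X-Y\|$, is the same one the paper's argument silently relies on inside the upper bound of~\eqref{eq:13-4}, so you have identified the genuinely load-bearing step correctly.
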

\begin{proof}
  The assertions~\ref{enu:2}–\ref{enu:4} in Definition~\ref{def:Risk} are straight forward to verify; to verify~\ref{enu:1} it is indispensable to assume that the norm is monotone.

  As for continuity, it follows from subadditivity together with~\eqref{eq:13-4} that
  $\mathcal{R}_\beta(Y)- \mathcal{R}_\beta(Z)\le \mathcal{R}_\beta(Y-Z)\le {1 \over 1-\beta} \|Y-Z\|$, and $|\mathcal{R}_\beta(Y)-\mathcal{R}_\beta(Z)|\le {1 \over 1-\beta} \|Y-Z\|$ after interchanging the roles of~$Y$ and~$Z$. Hence, the assertion.
\end{proof}
Note that the higher order risk measure as defined in~\eqref{eq:3} defines a risk functional based on a norm.
In contrast to this construction, a risk functional~$\mathcal{R}$ defines a norm via
\begin{equation}\label{eq:4}
  Y\|\coloneqq\mathcal{R}(|Y|)
\end{equation}
and a Banach space with $\mathcal{Y}=\left\{ Y\in L^1\colon \mathcal{R}(|Y|)<\infty\right\}$ (cf.\ \citet{Pichler2013a}).
Its natural dual norm for $Z\in \mathcal{Z}\coloneqq \mathcal{Y}^*$ is
\begin{align}
\|Z\|^\ast & \coloneqq\sup\left\{ \E YZ\colon\|Y\|\le 1\right\} \label{eq:5}\\
        & =\sup\left\{ \E YZ\colon\mathcal{R}(|Y|)\le1\right\}.
\end{align}
The following relationship allows defining a regret functional to connect a risk functional~$\mathcal R$ with the risk quadrangle.
\begin{proposition}[Duality]\label{prop:1}
 Let $\mathcal{R}$ be a risk functional with associated norm $\|\cdot\|$ and dual norm $\|\cdot\|^*$.
  For the higher order risk functional it holds that
\begin{align}
	\mathcal{R}_\beta(Y) & =\sup\left\{ \E YZ\colon Z\ge0,\ \E Z=1\text{ and }\|Z\|^*\le\frac{1}{1-\beta}\right\} \label{eq:6}\\
 & =\inf_{t\in\mathbb{R}}\ t+\frac{1}{1-\beta}\|(Y-t)_+\|,\label{eq:1.5}
\end{align}
	where $\beta\in[0,1)$.
\end{proposition}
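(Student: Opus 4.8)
The second representation~\eqref{eq:1.5} is nothing but the defining equation~\eqref{eq:3}, so the entire content lies in the first equality~\eqref{eq:6}, which I would prove by establishing the two inequalities separately. Abbreviate the dual feasible set by $\mathcal{A}_\beta\coloneqq\{Z\colon Z\ge0,\ \E Z=1,\ \|Z\|^*\le\frac1{1-\beta}\}$. The inequality ``$\ge$'' is the easy (weak duality) direction and needs no interchange of infima and suprema: for any $Z\in\mathcal{A}_\beta$ and any $t\in\mathbb{R}$ I would decompose $\E YZ=t\,\E Z+\E(Y-t)Z=t+\E(Y-t)Z$, and then use $Z\ge0$ together with $Y-t\le(Y-t)_+$ and the dual-norm inequality to obtain $\E(Y-t)Z\le\E(Y-t)_+Z\le\|(Y-t)_+\|\,\|Z\|^*\le\frac1{1-\beta}\|(Y-t)_+\|$. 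Hence $\E YZ\le t+\frac1{1-\beta}\|(Y-t)_+\|$ for every $t$; taking the infimum over $t$ and then the supremum over $Z\in\mathcal{A}_\beta$ gives $\sup_{Z\in\mathcal{A}_\beta}\E YZ\le\mathcal{R}_\beta(Y)$.

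For the reverse inequality (strong duality) the plan is to rewrite the norm in~\eqref{eq:3} through its own dual representation and to interchange the order of optimization. Since $(Y-t)_+\ge0$ and the norm is monotone, I would first establish the auxiliary fact that $\frac1{1-\beta}\|(Y-t)_+\|=\sup\{\E(Y-t)_+Z\colon Z\ge0,\ \|Z\|^*\le\frac1{1-\beta}\}$, i.e.\ that for a nonnegative argument the supremum in the bidual representation of the norm may be restricted to nonnegative $Z$ (this is where monotonicity of $\|\cdot\|$ enters). This yields $\mathcal{R}_\beta(Y)=\inf_t\sup_Z\ t+\E(Y-t)_+Z$, the infimum over $t\in\mathbb{R}$ and the supremum over $Z\in\mathcal{B}_\beta\coloneqq\{Z\ge0\colon\|Z\|^*\le\frac1{1-\beta}\}$. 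The objective is convex in $t$ and affine (hence concave) in $Z$, and the set $\mathcal{B}_\beta$ is convex and weak\textsuperscript{*} compact by Banach--Alaoglu; this is precisely the setting in which Sion's minimax theorem permits interchanging $\inf_t$ and $\sup_Z$.

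It then remains to evaluate, for fixed $Z\in\mathcal{B}_\beta$, the inner infimum $h(t)\coloneqq t+\E(Y-t)_+Z$. The function $h$ is convex with $h'(t)=1-\E\bigl[Z\,\one_{\{Y>t\}}\bigr]$, and I expect three cases. If $\E Z<1$, then $h$ is increasing and $\inf_t h=-\infty$, so such $Z$ are irrelevant. If $\E Z=1$, then $h$ is constant equal to $\E YZ$ on the left tail and $\inf_t h=\E YZ$. If $\E Z>1$, the minimizer $t^*$ satisfies $\E\bigl[Z\,\one_{\{Y>t^*\}}\bigr]=1$, whence $\inf_t h=\E\bigl[YZ\,\one_{\{Y>t^*\}}\bigr]=\E Y\tilde Z$ with $\tilde Z\coloneqq Z\,\one_{\{Y>t^*\}}$; since $0\le\tilde Z\le Z$ forces $\E\tilde Z=1$ and $\|\tilde Z\|^*\le\|Z\|^*$ (monotonicity of the dual norm on the positive cone), this $\tilde Z$ lies in $\mathcal{A}_\beta$, so the $\E Z>1$ contributions do not exceed $\sup_{\mathcal{A}_\beta}\E YZ$. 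Combining the three cases gives $\sup_{Z\in\mathcal{B}_\beta}\inf_t h=\sup_{Z\in\mathcal{A}_\beta}\E YZ$, which together with the minimax identity completes the proof.

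The main obstacle is the strong-duality step: justifying the interchange of $\inf_t$ and $\sup_Z$ rigorously in this infinite-dimensional setting, together with the two supporting facts it rests on, namely the reduction of the dual ball to its nonnegative part for nonnegative arguments, and the monotonicity $0\le\tilde Z\le Z\Rightarrow\|\tilde Z\|^*\le\|Z\|^*$ of the dual norm (both consequences of the single lemma that $\|Z\|^*=\sup\{\E YZ\colon Y\ge0,\ \|Y\|\le1\}$ for $Z\ge0$). An alternative that packages all of this is the Fenchel--Moreau route: $\mathcal{R}_\beta$ is sublinear and Lipschitz, hence proper, convex and lower semicontinuous, so $\mathcal{R}_\beta=\mathcal{R}_\beta^{**}$, and one computes directly that the conjugate $\mathcal{R}_\beta^{*}$ is the indicator of $\mathcal{A}_\beta$ --- the translation variable $t$ producing the constraint $\E Z=1$ and the regret term producing $\{Z\ge0,\ \|Z\|^*\le\frac1{1-\beta}\}$.
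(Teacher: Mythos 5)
Your proposal is correct, and the weak-duality half coincides with the paper's: the paper phrases the bound $\E(Y-t)_+Z\le\frac1{1-\beta}\|(Y-t)_+\|$ through the Hahn--Banach supremum representation of the norm, which is the same estimate you obtain from the dual-norm inequality. The strong-duality half, however, follows a genuinely different route. The paper argues directly: for bounded $Y$ it fixes $t^*\le Y$ a.s., uses $\E YZ=t^*+\E Z(Y-t^*)_+$ on the simplex $\{\E Z=1\}$, and then treats unbounded $Y$ by approximation with bounded majorants; no minimax theorem appears. You instead pass to $\inf_t\sup_{Z\in\mathcal B_\beta}\,t+\E(Y-t)_+Z$, interchange by Sion's theorem on the weak\textsuperscript{*} compact set $\mathcal B_\beta$, and eliminate the missing constraint $\E Z=1$ by the case analysis on $\inf_t h$. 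What your route buys is a self-contained justification of precisely the step the paper leaves terse: the paper's displayed identity $\sup_{\E Z=1}\E YZ=\sup_{Z\ge0}\,t^*+\E Z(Y-t^*)_+$, in which the constraint $\E Z=1$ is dropped on the right-hand side, is exactly where an argument is needed, and your truncation $\tilde Z=Z\one_{\{Y>t^*\}}$ showing that the $\E Z>1$ contributions never exceed $\sup_{\mathcal A_\beta}\E YZ$ supplies it. Two small points to tighten: (i) a level $t^*$ with $\E\bigl[Z\one_{\{Y>t^*\}}\bigr]=1$ need not exist exactly when $Y$ has atoms; take $\tilde Z=Z\bigl(\one_{\{Y>t^*\}}+\theta\,\one_{\{Y=t^*\}}\bigr)$ with $\theta\in[0,1]$ chosen so that $\E\tilde Z=1$ --- the identity $\inf_t h=\E Y\tilde Z$ survives because $(Y-t^*)\one_{\{Y=t^*\}}=0$; (ii) your two supporting facts (restriction of the dual ball to $Z\ge0$ for nonnegative arguments, and $0\le\tilde Z\le Z\Rightarrow\|\tilde Z\|^*\le\|Z\|^*$) use not only monotonicity but also $\||Y|\|=\|Y\|$, which does hold here because the associated norm is $\|Y\|=\mathcal R(|Y|)$. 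The Fenchel--Moreau alternative you sketch is equally viable and is the standard packaging of the same computation.
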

\begin{remark}
	By the interconnecting formula~\eqref{eq:1}, the higher order risk functional $\mathcal R_\beta^{\|\cdot\|}$ associated with the norm $\|\cdot\|$ is the regret function $\mathcal V(\cdot)\coloneqq \frac1{1-\beta}\|(\cdot)_+\|$.
\end{remark}
\begin{proof}
  It holds by the Hahn--Banach theorem and as $(Y-t)_+\ge0$ that
  \[\frac1{1-\beta}\cdot\|(Y-t)_+\|
    = \sup_{\|Z\|^*\le \frac1{1-\beta}} \E Z(Y-t)_+
      \ge \sup_{\substack{\E Z=1,\ Z\ge0,\\ \|Z\|^*\le \frac1{1-\beta}}} \E Z(Y-t)_+.
 \]
	This establishes the first inequality ‘$\le$’ in~\eqref{eq:1.5} with $t+(Y-t)_+\ge Y$, as
  \begin{align}
    t+\frac1{1-\beta}\cdot\|(Y-t)_+\|
    	&\ge \sup_{\substack{\E Z=1\\Z\ge0,\ \|Z\|^*\le \frac1{1-\beta}}} \E \bigl(t+(Y-t)_+\bigr)Z		\\
		&\ge \sup_{\substack{\E Z=1\\Z\ge0,\ \|Z\|^*\le\frac1{1-\beta}}}\E YZ.
	\end{align}

  As for the converse inequality assume first that $Y$ is bounded. Note, that
  \[	\inf_{t\in\mathbb R} t+\E(Y-t)Z
    = \E YZ+\inf_{t\in\mathbb R} t\cdot(1-\E Z) = \begin{cases}
		\E YZ&	\text{if }\E Z=1,	\\
		-\infty &	\text{else},
  \end{cases}
\]
	so that it follows that
  \[	\sup_{\substack{\E Z=1\\ Z\ge0,\ \|Z\|^*\le \frac1{1-\beta}}} \E YZ
  	= \sup_{\substack{Z\ge0,\\ \|Z\|^*\le \frac1{1-\beta}}} \inf_{t\in\mathbb R} t+\E(Y-t) Z.
  \]
  Further, it holds that $\E YZ= t^*+\E Z(Y-t^*)_+$ for $t^*\le Y$ a.s.\ and thus
	\[	\sup_{\substack{\E Z=1,Z\ge0,\\ \ \|Z\|^*\le\frac1{1-\beta}}} \E YZ
 			= \sup_{\substack{Z\ge0,\\ \|Z\|^*\le\frac1{1-\beta}}} t^*+\E Z(Y-t^*)_+
			= t^* +\frac1{1-\beta} \|(Y-t^*)\|\ge \inf_{t\in\mathbb R} t+\frac1{1-\beta}\|(Y-t)_+\|,
 \]
	thus the desired converse inequality, provided that $Y$ is bounded;
	if $Y$ is not bounded, then there is a bounded~$Y_\varepsilon$ with $Y\le Y_\varepsilon$ ($\varepsilon>0$) and $\|Y_\varepsilon-Y\|<\varepsilon$, so that
  \[	\E Z(Y_\varepsilon-t)_+-\varepsilon\E Z\le\E Z(Y-t)_+\le \E Z(Y_\varepsilon-t)_+,
  \]
  so that we may conclude that~\eqref{eq:1.5} holds for every $Y\in\mathcal{Y}$.
\end{proof}
\begin{example}[Hölder spaces]
  The dual norm of the genuine norm $\|X\|_p\coloneqq(\E|X|^p)^{\nicefrac{1}{p}}$ in the Hölder space $L^p(P)$ is $\|Z\|^*=(\E|Z|^q)^{\nicefrac{1}{q}}$ for the Hölder conjugate exponent with $\frac{1}p+\frac{1}q=1$.
  With Proposition~\ref{prop:1} it follows that
  \begin{align*}
  \mathcal{R}_\beta^{\|\cdot\|_p}(Y) & =\inf_{t\in\mathbb{R}}\ t+\frac{1}{1-\beta}\|(Y-t)_+\|_p\\
  & =\sup\left\{ \E YZ\colon\|Z\|_{q}\le\frac{1}{1-\beta},\ Z\ge0\text{ and }\E Z=1\right\} ,
  \end{align*}
  cf.\ also \citet{ShapiroAlois} and \citet{Pichler2017}.
\end{example}
In what follows, we shall elaborate the higher order risk measure and the associated regret function for specific risk measures, specifically the spectral risk measure.

\section{Higher order spectral risk\label{sec:Spectral}}
By Kusuoka’s theorem (cf.\ \citet{Kusuoka}), every law invariant risk functional can be assembled by elementary risk functionals, each involving the average value-at-risk.

The following section develops the explicit representations of the higher order risk measures associated with spectral risk measures first. The explicit representation then is extended to general risk functionals.
\begin{definition}[Spectral risk measures]\label{def:Spectral}
  The function $\sigma\colon[0,1)\to\mathbb{R}$ is called a \emph{spectral function}, if
  \begin{enumerate}[nolistsep,noitemsep]
    \item\label{item:1} $\sigma(\cdot)\ge0$,
    \item\label{item:ii} $\int_0^1\sigma(u)\,\mathrm{d}u=1$ and
    \item\label{enu:dec} $\sigma(\cdot)$ is non-decreasing.
  \end{enumerate}
  The \emph{spectral risk measure} with spectral function $\sigma$ is
  \[	\mathcal{R}_\sigma(Y)\coloneqq\int_0^1\sigma(u)F_Y^{-1}(u)\,\mathrm{d}u,
  \]
  where
  \[F_Y^{-1}(u)\coloneqq\VaR_u(Y)\coloneqq\inf\left\{ x\in\mathbb{R}\colon P(Y\le x)\ge u\right\} 
  \]
  is the \emph{value-at-risk}, the \emph{generalized inverse} or \emph{quantile function}.
\end{definition}

The higher order risk measure of the spectral risk measure is a spectral risk measure itself.
The following theorem presents the corresponding spectral function explicitly and generalizes \citet{Pflug2000}. The result is central towards the main characterization presented in the next sections.
\begin{theorem}[Higher order spectral risk]\label{thm:Spectral}
  Let $\beta\in[0,1)$ be a risk level. The higher order risk functional of the risk functional $\mathcal{R}_{\sigma}$ with spectral function $\sigma(\cdot)$ has the representation
  \begin{equation}\label{eq:7}
    \inf_{t\in\mathbb R}\ t+\frac{1}{1-\beta}\mathcal{R}_\sigma\bigl((Y-t)_+\bigr)=\mathcal{R_{\sigma_\beta}}(Y),
  \end{equation}
  where $\sigma_\beta$ is the spectral function
 \begin{equation}\label{eq:sigma}
    \sigma_\beta(u)\coloneqq\begin{cases}
      0 & \text{if }u<u_\beta,\\
      \frac{\sigma(u)}{1-\beta} & \text{else};
    \end{cases}
  \end{equation}
  here, $u_{\beta}\in\mathbb{R}$ is the $\beta$\nobreakdash-quantile with respect to the density $\sigma$, that is, the solution of
  \begin{equation}\label{eq:Quantile}
    \int_0^{u_\beta}\sigma(u)\,\mathrm{d}u=\beta,
  \end{equation}
  which is unique for $\beta>0$.
\end{theorem}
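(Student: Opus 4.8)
The plan is to reduce the infimum over~$t$ to an elementary optimization over the quantile function and then to identify the minimizer explicitly. The first step is to record how the positive part interacts with quantiles: since $x\mapsto(x-t)_+$ is non-decreasing and continuous, the generalized inverse commutes with it, so that $F_{(Y-t)_+}^{-1}(u)=\bigl(F_Y^{-1}(u)-t\bigr)_+$. Writing $q(u)\coloneqq F_Y^{-1}(u)$ for brevity, this turns the defining objective into
\[
  g(t)\coloneqq t+\frac1{1-\beta}\,\mathcal R_\sigma\bigl((Y-t)_+\bigr)=t+\frac1{1-\beta}\int_0^1\sigma(u)\,\bigl(q(u)-t\bigr)_+\,\mathrm{d}u,
\]
so that the theorem amounts to minimizing $g$ and evaluating the minimum.

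For the lower bound I would show $g(t)\ge\mathcal R_{\sigma_\beta}(Y)$ for every $t\in\mathbb R$, deliberately avoiding differentiation. By the defining equation~\eqref{eq:Quantile} of $u_\beta$ together with $\int_0^1\sigma=1$ one has $\int_{u_\beta}^1\sigma(u)\,\mathrm{d}u=1-\beta$, whence $t=\frac1{1-\beta}\int_{u_\beta}^1\sigma(u)\,t\,\mathrm{d}u$. After multiplying by $1-\beta$, the claim $g(t)\ge\mathcal R_{\sigma_\beta}(Y)=\frac1{1-\beta}\int_{u_\beta}^1\sigma(u)q(u)\,\mathrm{d}u$ reduces to
\[
  \int_0^1\sigma(u)\,\bigl(q(u)-t\bigr)_+\,\mathrm{d}u\ \ge\ \int_{u_\beta}^1\sigma(u)\,\bigl(q(u)-t\bigr)\,\mathrm{d}u.
\]
This follows from two elementary observations chained together: restricting the left-hand integral from $[0,1]$ to $[u_\beta,1]$ only decreases it, as the integrand is non-negative and $\sigma\ge0$; and on $[u_\beta,1]$ the pointwise bound $(q(u)-t)_+\ge q(u)-t$ holds. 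This gives the lower bound uniformly in~$t$.

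It remains to exhibit equality, and here the monotonicity of $q$ (equivalently, axiom~\ref{enu:dec} forcing $\sigma$ to be a proper spectral function) is the crucial input. Choosing $t^\ast\coloneqq q(u_\beta)=F_Y^{-1}(u_\beta)$, I would argue that both inequalities above collapse to equalities: for $u<u_\beta$ monotonicity yields $q(u)\le q(u_\beta)=t^\ast$, so $(q(u)-t^\ast)_+=0$ and truncating the integration domain is lossless; for $u\ge u_\beta$ one has $q(u)\ge t^\ast$, so $(q(u)-t^\ast)_+=q(u)-t^\ast$ and the positive part is inactive. Hence $g(t^\ast)=\mathcal R_{\sigma_\beta}(Y)$ and the infimum is attained. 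To close the argument I would verify that $\sigma_\beta$ from~\eqref{eq:sigma} is itself a spectral function (non-negativity and monotonicity are immediate, while $\int_0^1\sigma_\beta=\frac1{1-\beta}\int_{u_\beta}^1\sigma=1$), and that $u_\beta$ is well defined and unique for $\beta>0$ because $u\mapsto\int_0^u\sigma$ is continuous, non-decreasing, and strictly increasing wherever $\sigma>0$.

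The step I expect to require the most care is the quantile identity $F_{(Y-t)_+}^{-1}=(F_Y^{-1}-t)_+$: though standard for non-decreasing continuous transformations, it should be justified or cited for the generalized inverse of Definition~\ref{def:Spectral}. By contrast, the optimization itself is robust: the lower-bound-plus-equality route makes no appeal to differentiability and accommodates flat pieces or jumps of~$q$ without any modification.
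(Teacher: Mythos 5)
Your proposal is correct, and it reaches the identity by a genuinely different route than the paper. The paper fixes the candidate $t^*=F_Y^{-1}(u_\beta)$ at the outset and compares the objective at $t$ against the objective at $t^*$ through a two-sided case analysis ($t\ge t^*$ and $t\le t^*$), each branch invoking the Galois connection $u\le F_Y(t)\iff F_Y^{-1}(u)\le t$ and a further sub-case on the relative position of $u_\beta$ and $F_Y(t^*)$; the inequalities \eqref{eq:9} and \eqref{eq:10} are then combined to conclude optimality of $t^*$. You instead establish the uniform lower bound $g(t)\ge\mathcal R_{\sigma_\beta}(Y)$ for \emph{every} $t$ from nothing more than $(x)_+\ge\max(x,0)$, $\sigma\ge0$ and the normalization $\int_{u_\beta}^1\sigma(u)\,\mathrm du=1-\beta$ (which makes the $t$-terms cancel exactly), and then verify equality at $t^*$ by monotonicity of the quantile function. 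This buys a shorter argument that dispenses with the Galois connection and the case distinctions, is manifestly insensitive to atoms and flat pieces of $F_Y$, and is the natural generalization of the Rockafellar--Uryasev/Pflug verification of the $\AVaR$ formula; the paper's route, by contrast, makes the one-sided monotonicity structure of the objective in $t$ more visible. Your quantile identity $F_{(Y-t)_+}^{-1}(u)=\bigl(F_Y^{-1}(u)-t\bigr)_+$ is the same ingredient the paper states in case form (the two expressions differ at most at the single point $u=F_Y(t)$, which is immaterial under the integral), and it is standard for non-decreasing continuous transformations. One small slip worth fixing: the monotonicity of $q=F_Y^{-1}$ that you use at $t^*$ holds for every quantile function and is \emph{not} equivalent to axiom \ref{enu:dec} on $\sigma$; that axiom enters only to make $\sigma_\beta$ in \eqref{eq:sigma} non-decreasing and to guarantee uniqueness of $u_\beta$ in \eqref{eq:Quantile} for $\beta>0$ — indeed, your argument shows the identity \eqref{eq:7} itself does not require $\sigma$ to be non-decreasing at all.
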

\begin{proof}
  We remark first that $\sigma_\beta$ indeed is a spectral function, as $\int_0^1 \sigma_\beta(u)\,\mathrm du= \frac1{1-\beta}\int_{u_\beta}^1\sigma(u)\,\mathrm du=\frac{1-\beta}{1-\beta}=1$ by the defining property~\eqref{eq:Quantile} and~\ref{item:ii} in Definition~\ref{def:Spectral}.
  The quantile $u_\beta$ is uniquely defined for $\beta>0$, as the function $\sigma$ is non-decreasing by~\ref{enu:dec}.
  In what follows we shall demonstrate that the infimum in~\eqref{eq:7} is attained at $t^*\coloneqq F_Y^{-1}(u_\beta)$.
  Note first that
  \[	F_{(Y-t)_+}^{-1}(u)=\begin{cases}
      0 & \text{if }u<F_Y(t),\\
      F_{Y}^{-1}(u)-t & \text{else},
    \end{cases}
  \]
  so that
  \[	\mathcal{R}_\sigma\bigl((Y-t)_+\bigr)=\int_0^1\sigma(u)F_{(Y-t)_{+}}^{-1}(u)\,\mathrm{d}u=\int_{F_Y(t)}^1\sigma(u)\bigl(F_Y^{-1}(u)-t\bigr)\,\mathrm{d}u
  \]
  and
  \begin{equation}\label{eq:8}
    (\mathcal{R}_\sigma)_\beta(Y)=\inf_{t\in\mathbb{R}}\ t+\frac{1}{1-\beta}\int_{F_Y(t)}^1\sigma(u)\bigl(F_Y^{-1}(u)-t\bigr)\,\mathrm{d}u.
  \end{equation}

  Assume first that $t\le t^*$. The inequality $u\le F_Y(t)$ is equivalent to $F_Y^{-1}(u)\le t$ (cf.\ \citet{vdVaart}; this relation of functions $F_Y$ and $F_Y^{-1}$ is occasionally called a \emph{Galois connection}), and thus
  \[	\int_{F_Y(t)}^{F_Y(t^*)}\sigma(u)\bigl(F_Y^{-1}(u)-t\bigr)\,\mathrm{d}u\le0,
  \]
  or equivalently
  \[	\int_{F_Y(t)}^1 \sigma(u)\bigl(F_Y^{-1}(u)-t\bigr)\,\mathrm{d}u\le\int_{F_Y(t^*)}^1\sigma(u)\bigl(F_Y^{-1}(u)-t\bigr)\,\mathrm{d}u.
  \]
  Assume next that $u_\beta\le F_Y(t^*)$, then $\int_{F_Y(t^*)}^1\sigma(u)\,\mathrm{d}u\le1-\text{\ensuremath{\beta}}$ so that
  \[\frac{t-t^*}{1-\beta}\int_{F_Y(t^*)}^1\sigma(u)\,\mathrm{d}u\le t-t^*.
  \]
  Combining the inequalities in the latter displays gives
  \begin{equation}
    t^*+\frac{1}{1-\beta}\int_{F_Y(t^*)}^1\sigma(u)\bigl(F_Y^{-1}(u)-t^*\bigr)\,\mathrm{d}u\le t+\frac{1}{1-\beta}\int_{F_Y(t)}^1\sigma(u)\bigl(F_Y^{-1}(u)-t\bigr)\,\mathrm{d}u\label{eq:9}
  \end{equation}
  and thus the assertion, provided that $u_{\beta}\le F_{Y}(t^{*})$ and $t^*\le t$.

  Conversely, assume that $t\le t^*$. Then the inequality $u\le F_Y(t^*)$ is equivalent to $F_{Y}^{-1}(u)\le t^{*}$ and thus
  \[	\int_{F_Y(t)}^{F_Y(t^*)}\sigma(u)\bigl(F_{Y}^{-1}(u)-t^*\bigr)\,\mathrm{d}u\le0,
  \]
  which is equivalent to
  \[	\int_{F_Y(t)}^{1}\sigma(u)\bigl(F_Y^{-1}(u)-t^*\bigr)\,\mathrm{d}u\le\int_{F_Y(t^*)}^1\sigma(u)\bigl(F_Y^{-1}(u)-t^*\bigr)\,\mathrm{d}u.
  \]
  Assume further that $F_Y(t^*)\le u_\beta$, then $\int_{F_Y(t^*)}^1\sigma(u)\,\mathrm{d}u\ge1-\text{\ensuremath{\beta}}$ so that
  \[	t^*-t\le\frac{t^*-t}{1-\beta}\int_{F_Y(t^*)}^1\sigma(u)\,\mathrm{d}u.
  \]
  Combining the latter inequalities gives
  \begin{equation}\label{eq:10}
    t^*+\frac{1}{1-\beta}\int_{F_Y(t)}^{1}\sigma(u)\bigl(F_Y^{-1}(u)-t^*\bigr)\,\mathrm{d}u\le t+\frac{1}{1-\beta}\int_{F_Y(t^*)}^{1}\sigma(u)\bigl(F_Y^{-1}(u)-t\bigr)\,\mathrm{d}u.
  \end{equation}
  It follows from~\eqref{eq:9} and~\eqref{eq:10} that $t^*\coloneqq F_Y^{-1}(u_\beta)$ is optimal in~\eqref{eq:8}. That is,
  \begin{align}
    (\mathcal{R}_\sigma)_\beta(Y) & =t^{*}+\frac{1}{1-\beta}\int_{u_\beta}^{1}\sigma(u)\bigl(F_Y^{-1}(u)-t^*\bigr)\,\mathrm{d}u\nonumber \\
                                      & =\frac{1}{1-\beta}\int_{u_\beta}^{1}\sigma(u)F_Y^{-1}(u)\,\mathrm{d}u\nonumber \\
                                      & =\int_0^1\sigma_\beta(u)F_Y^{-1}(u)\,\mathrm{d}u\label{eq:Comontone}
  \end{align}
  and thus the assertion.
\end{proof}
The following statement expresses the higher order risk functional by at the base value $u_\beta$, and the random variable’s aberrations to the right, involving the survival function instead of its inverse distribution function.
\begin{corollary}The higher order spectral risk measure is
  \begin{equation}\label{eq:11}
  \bigl(\mathcal{R}_{\sigma}\bigr)_\beta(Y)=\VaR_{u_\beta}(Y)+\frac{1}{1-\beta}\int_{\VaR_{u_\beta}(Y)}^\infty\Sigma\bigl(F_Y(y)\bigr)\mathrm{d}y
  \end{equation}
  (with $u_\beta$ as in~\eqref{eq:Quantile}) or, provided that~$Y$ is bounded,
  \begin{align}\label{eq:12}
  \bigl(\mathcal{R}_\sigma\bigr)_{\beta}(Y) & =\essinf Y+\int_{\essinf Y}^\infty\Sigma_\beta\bigl(F_Y(y)\bigr)\mathrm{d}y,
  \end{align}
  where
  \[	\Sigma_\beta(u)\coloneqq\min\left(1,\ \frac{1}{1-\beta}\int_{u}^{1}\sigma(p)\,\mathrm{d}p\right)
  \]
  is the cumulative spectral function and $\Sigma(u)\coloneqq\Sigma_0(u)=\int_u^1\sigma(p)\,\mathrm{d}p$.
\end{corollary}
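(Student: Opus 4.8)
The plan is to obtain both representations directly from the conclusion of Theorem~\ref{thm:Spectral}, which already identifies the optimal $t^\ast\coloneqq\VaR_{u_\beta}(Y)=F_Y^{-1}(u_\beta)$ and yields $(\mathcal R_\sigma)_\beta(Y)=t^\ast+\frac1{1-\beta}\int_{u_\beta}^1\sigma(u)\bigl(F_Y^{-1}(u)-t^\ast\bigr)\,\mathrm du$. The first identity~\eqref{eq:11} is a change of the integration variable from the quantile scale to the $y$\nobreakdash-scale, and the second identity~\eqref{eq:12} then follows by re-anchoring the lower integration limit at $\essinf Y$ and absorbing the correction into the truncated spectral function $\Sigma_\beta$.

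For~\eqref{eq:11} I would use a layer-cake representation of the inner factor. Since $u\ge u_\beta$ forces $F_Y^{-1}(u)\ge F_Y^{-1}(u_\beta)=t^\ast$, I can write $F_Y^{-1}(u)-t^\ast=\int_{t^\ast}^\infty\one\{F_Y^{-1}(u)>y\}\,\mathrm dy$. Invoking the Galois connection already used in the theorem (namely $F_Y^{-1}(u)>y\iff u>F_Y(y)$) and applying Fubini gives
\[
\int_{u_\beta}^1\sigma(u)\bigl(F_Y^{-1}(u)-t^\ast\bigr)\,\mathrm du=\int_{t^\ast}^\infty\int_{u_\beta}^1\sigma(u)\,\one\{u>F_Y(y)\}\,\mathrm du\,\mathrm dy.
\]
The key observation is that for $y\ge t^\ast$ one has $F_Y(y)\ge F_Y(t^\ast)\ge u_\beta$, so the inner set $\{u\in[u_\beta,1]:u>F_Y(y)\}$ collapses to $(F_Y(y),1]$ and the inner integral equals $\int_{F_Y(y)}^1\sigma(u)\,\mathrm du=\Sigma\bigl(F_Y(y)\bigr)$. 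Substituting back recovers exactly~\eqref{eq:11}.

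For~\eqref{eq:12} I would first make $\Sigma_\beta$ explicit: because $\Sigma(u_\beta)=\int_{u_\beta}^1\sigma=1-\beta$ by~\eqref{eq:Quantile} and $\Sigma$ is non-increasing, the clamp $\min\bigl(1,\frac1{1-\beta}\Sigma(u)\bigr)$ equals $1$ precisely for $u\le u_\beta$ and equals $\frac1{1-\beta}\Sigma(u)$ for $u>u_\beta$. Splitting $\int_{\essinf Y}^\infty$ at $t^\ast$, the Galois connection gives $F_Y(y)<u_\beta$ for $\essinf Y\le y<t^\ast$, so $\Sigma_\beta\bigl(F_Y(y)\bigr)=1$ there and this block contributes $t^\ast-\essinf Y$; on $y\ge t^\ast$ one has $F_Y(y)\ge u_\beta$, so $\Sigma_\beta\bigl(F_Y(y)\bigr)=\frac1{1-\beta}\Sigma\bigl(F_Y(y)\bigr)$. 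Adding $\essinf Y$ telescopes the constant part to $t^\ast$ and reproduces~\eqref{eq:11}; boundedness of $Y$ is used only to guarantee $\essinf Y>-\infty$ so that the lower limit is finite. The main thing to watch is the boundary behaviour at $u_\beta$ and $t^\ast$, where strict versus non-strict inequalities in the Galois connection must be tracked, but these are harmless since $\Sigma_\beta$ is continuous at $u_\beta$ (both expressions give the value $1$ there) and the single point $y=t^\ast$ is negligible for the integral.
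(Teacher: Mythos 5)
Your proof is correct, and it reaches the two formulas by a route that differs from the paper's in its main technical device. Both arguments start from the conclusion of Theorem~\ref{thm:Spectral}, namely the optimality of $t^\ast=F_Y^{-1}(u_\beta)$ and the identity $(\mathcal R_\sigma)_\beta(Y)=t^\ast+\frac1{1-\beta}\int_{u_\beta}^1\sigma(u)\bigl(F_Y^{-1}(u)-t^\ast\bigr)\,\mathrm du$. The paper then performs Riemann--Stieltjes integration by parts against $\mathrm d\Sigma_\beta(u)$ and changes variables via $y=F_Y^{-1}(u)$, obtaining~\eqref{eq:12} first (integrating over all of $[0,1]$) and~\eqref{eq:11} afterwards (keeping the lower limit at $u_\beta$). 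You instead use the layer-cake identity $F_Y^{-1}(u)-t^\ast=\int_{t^\ast}^\infty\one\{F_Y^{-1}(u)>y\}\,\mathrm dy$ together with Tonelli and the Galois connection to get~\eqref{eq:11} directly, and then recover~\eqref{eq:12} by making the clamp in $\Sigma_\beta$ explicit and splitting the $y$\nobreakdash-integral at $t^\ast$. Your route is somewhat more self-contained: it avoids the boundary-term bookkeeping and the unjustified substitution $\int_0^1\Sigma_\beta(u)\,\mathrm dF_Y^{-1}(u)=\int_{\essinf Y}^\infty\Sigma_\beta\bigl(F_Y(y)\bigr)\,\mathrm dy$ in the paper, which is delicate when $F_Y^{-1}$ has jumps or flat pieces, and it isolates exactly where boundedness of $Y$ enters (finiteness of $\essinf Y$). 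The paper's integration-by-parts presentation is more compact and mirrors the standard derivation of distortion-risk formulas. One small wording point: your claim that $\Sigma_\beta(u)=1$ ``precisely'' for $u\le u_\beta$ can fail if $\sigma$ vanishes on an interval just above $u_\beta$, but as you note the two branches of the minimum agree there, so nothing is affected.
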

\begin{proof}
  Notice first that $\Sigma_\beta(u)=1$ for $u\le u_\beta$, where~$u_\beta$ is given in~\eqref{eq:Quantile}.
  By Theorem~\ref{thm:Spectral}, Riemann–Stieltjes integration by parts and changing the variables it holds that
  \begin{align}
  \bigl(\mathcal{R}_\sigma\bigr)_\beta(Y) & =\mathcal{R}_{\sigma_\beta}(Y)\nonumber \\
  & =\frac{1}{1-\beta}\int_{u_\beta}^1\sigma(u)F_Y^{-1}(u)\,\mathrm{d}u\nonumber \\
  & =-\int_0^1F_Y^{-1}(u)\,\mathrm{d}\Sigma_\beta(u)\label{eq:13}\\
  & =-\left.F_Y^{-1}(u)\Sigma_\beta(u)\right|_{u=0}^1+\int_0^1\Sigma_\beta(u)\,\mathrm{d}F_Y^{-1}(u)\nonumber \\
  & =\essinf Y+\int_{\essinf Y}^\infty\Sigma_\beta\bigl(F_Y(y)\bigr)\,\mathrm{d}y,\label{eq:5-1}
  \end{align}
  where we have used that $F_Y^{-1}(0)=\essinf Y$ and $\Sigma_\beta(1)=0$ in~\eqref{eq:5-1}.
  This gives~\eqref{eq:12}.

  The equation~\eqref{eq:11} results from sticking to the lower bound $u_\beta$ (instead of $0$) in~\eqref{eq:13}. That is,
  \begin{align*}
  \bigl(\mathcal{R}_\sigma\bigr)_\beta(Y) & =-\int_{u_\beta}^1F_Y^{-1}(u)\,\mathrm{d}\Sigma_\beta(u)\\
  & =-\left.F_Y^{-1}(u)\Sigma_\beta(u)\right|_{u=u_\beta}^1+\int_{u_\beta}^1\Sigma_\beta(u)\,\mathrm{d}F_Y^{-1}(u)\\
  & =\VaR_{u_\beta}(Y)+\int_{\VaR_{u_\beta}(Y)}^\infty\Sigma_\beta\bigl(F_Y(y)\bigr)\,\mathrm{d}y,
  \end{align*}
  which is assertion~\eqref{eq:11}.
\end{proof}
\begin{corollary}
  The higher order spectral risk measure has the representation
  \[
  \mathcal{R}_{\sigma_\beta}(Y)=\sup \{ \E[Y\cdot\sigma_\beta(U)]\colon U\in[0,1]\text{ is uniformly distributed} \}.
  \]
\end{corollary}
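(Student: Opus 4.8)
The plan is to recognize the asserted identity as an instance of the Hardy--Littlewood--P\'olya rearrangement inequality, combined with the comonotone representation already obtained in~\eqref{eq:Comontone}. By Theorem~\ref{thm:Spectral} we have $\mathcal{R}_{\sigma_\beta}(Y)=\int_0^1\sigma_\beta(u)\,F_Y^{-1}(u)\,\mathrm{d}u$, so it suffices to show that the supremum on the right-hand side equals this integral, and that it is attained.

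First I would establish the inequality ``$\le$''. Fix any uniformly distributed $U$ on $[0,1]$. Since $\sigma$ is non-decreasing by~\ref{enu:dec}, the truncated and rescaled function $\sigma_\beta$ defined in~\eqref{eq:sigma} is non-decreasing as well (its only jump, at $u_\beta$, goes upward from $0$ to $\sigma(u_\beta)/(1-\beta)\ge0$). Consequently $\sigma_\beta(U)$ is a non-negative random variable whose quantile function coincides with $\sigma_\beta$ almost everywhere, that is, $F_{\sigma_\beta(U)}^{-1}=\sigma_\beta$: the quantile function of a non-decreasing transform of a uniform variable is the transform itself, up to the at most countably many discontinuities, which are negligible for the integral. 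The Hardy--Littlewood--P\'olya inequality then yields
\[
  \E\bigl[Y\cdot\sigma_\beta(U)\bigr]\le\int_0^1 F_Y^{-1}(u)\,F_{\sigma_\beta(U)}^{-1}(u)\,\mathrm{d}u=\int_0^1 F_Y^{-1}(u)\,\sigma_\beta(u)\,\mathrm{d}u=\mathcal{R}_{\sigma_\beta}(Y),
\]
and taking the supremum over all such $U$ gives $\sup\{\,\cdots\}\le\mathcal{R}_{\sigma_\beta}(Y)$.

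For the reverse inequality I would exhibit a maximizer realizing equality, namely the comonotone coupling of $Y$ with $U$. There is a uniformly distributed $U$ with $Y=F_Y^{-1}(U)$ almost surely, and for this choice
\[
  \E\bigl[Y\cdot\sigma_\beta(U)\bigr]=\E\bigl[F_Y^{-1}(U)\,\sigma_\beta(U)\bigr]=\int_0^1 F_Y^{-1}(u)\,\sigma_\beta(u)\,\mathrm{d}u=\mathcal{R}_{\sigma_\beta}(Y),
\]
so that $\sup\{\,\cdots\}\ge\mathcal{R}_{\sigma_\beta}(Y)$. Combining the two inequalities proves the claimed identity, and en passant shows the supremum is actually a maximum.

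The main obstacle is the existence of the comonotone uniform variable $U$ with $Y=F_Y^{-1}(U)$ almost surely. For $Y$ with continuous distribution function one may simply take $U\coloneqq F_Y(Y)$; in general the construction requires the underlying probability space to be atomless (or an auxiliary independent randomization on the flat pieces of $F_Y$), which is the standard setting for law-invariant spectral risk measures. By contrast, the rearrangement inequality itself and the identification $F_{\sigma_\beta(U)}^{-1}=\sigma_\beta$ are routine once the monotonicity of $\sigma_\beta$ has been recorded.
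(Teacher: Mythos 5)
Your argument is correct and follows essentially the same route as the paper: the upper bound via the rearrangement inequality applied to the comonotone pair $F_Y^{-1}(U)$, $\sigma_\beta(U)$, and attainment via the comonotone coupling $Y=F_Y^{-1}(U)$, concluding with~\eqref{eq:Comontone}. Your additional remark on when the comonotone coupling actually exists (atomless space or continuous $F_Y$) is a point the paper glosses over, but it does not change the substance of the proof.
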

\begin{proof}
  Recall first that $Y\sim F_Y^{-1}(U)$ for $U$ uniformly distributed.
  By the rearrangement inequality, $\E Y\sigma_\beta(U)\le\E F_{Y}^{-1}(U)\sigma_{\beta}(U)$, because $F_Y^{-1}(U)$ and $\sigma_\beta(U)$ are comonotone and both, $F_Y^{-1}(\cdot)$ and $\sigma_\beta(\cdot)$ are non-decreasing functions.
  The assertion follows with~\eqref{eq:Comontone}.
\end{proof}

The celebrated formula (cf.\ \citet{Pflug2000, RockafellarUryasev2000, RuszOgryczak})
\[  \AVaR_\alpha(Y)= \frac1{1-\alpha}\int_\alpha^1\VaR_u(Y)\,\mathrm{d}u =\inf_{t\in\mathbb{R}}\ t+\frac1{1-\alpha}\E (Y-t)_+
\]
for the average value-at-risk is a special case of preceding Theorem~\ref{thm:Spectral} for the spectral function $\sigma(\cdot)= \frac1{1-\alpha}\one_{[\alpha,1]}(\cdot)$.

The following corollary estabilshes this risk functional’s higher order variant.
\begin{corollary}[Average value-at-risk]\label{cor:AVaR-1}
  The higher order average value-at-risk is
  \begin{equation}\label{eq:14}
    (\AVaR_{\alpha})_\beta(Y)=\AVaR_{1-(1-\alpha)(1-\beta)}(Y),
  \end{equation}
  where $Y\in L^1$; equivalently,
  \begin{equation}\label{eq:15}
    \AVaR_\beta(Y)=\inf_{t\in\mathbb{R}}\ t+\frac{1}{1-\frac{\beta-\alpha}{1-\alpha}}\AVaR_\alpha\bigl((Y-t)_+\bigr),
  \end{equation}
  where $\beta\ge\alpha$.
\end{corollary}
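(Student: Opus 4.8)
The plan is to specialize Theorem~\ref{thm:Spectral} to the spectral function of the average value-at-risk. As recalled immediately before the statement, $\AVaR_\alpha=\mathcal{R}_\sigma$ is the spectral risk measure with
\[
\sigma(u)=\frac{1}{1-\alpha}\,\one_{[\alpha,1]}(u),
\]
so everything reduces to evaluating the quantile $u_\beta$ from~\eqref{eq:Quantile} and the transformed spectral function $\sigma_\beta$ from~\eqref{eq:sigma} for this particular $\sigma$, and then recognizing the outcome as an average value-at-risk again.

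First I would solve the defining equation~\eqref{eq:Quantile}. Since $\sigma$ vanishes on $[0,\alpha)$ and equals $\frac{1}{1-\alpha}$ on $[\alpha,1]$, for $\beta>0$ the solution $u_\beta$ must lie in $[\alpha,1)$ and satisfies $\frac{1}{1-\alpha}(u_\beta-\alpha)=\beta$, whence
\[
u_\beta=\alpha+\beta(1-\alpha)=1-(1-\alpha)(1-\beta).
\]
Substituting into~\eqref{eq:sigma} and using $\sigma(u)=\frac{1}{1-\alpha}$ for $u\ge u_\beta\ge\alpha$ gives
\[
\sigma_\beta(u)=\frac{1}{(1-\alpha)(1-\beta)}\,\one_{[u_\beta,1]}(u).
\]
Because $1-u_\beta=(1-\alpha)(1-\beta)$, this is exactly the spectral function $\frac{1}{1-u_\beta}\,\one_{[u_\beta,1]}$ of $\AVaR_{u_\beta}$, so Theorem~\ref{thm:Spectral} yields $(\AVaR_\alpha)_\beta=\mathcal{R}_{\sigma_\beta}=\AVaR_{u_\beta}$, which is precisely~\eqref{eq:14}.

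It remains to recast~\eqref{eq:14} in the equivalent form~\eqref{eq:15}. Here I would treat the level $\beta$ appearing in~\eqref{eq:15} as the prescribed \emph{output} level and invert the relation $1-(1-\alpha)(1-\beta')=\beta$ for the higher order parameter $\beta'$, obtaining $\beta'=\frac{\beta-\alpha}{1-\alpha}$; the standing hypothesis $\beta\ge\alpha$ is exactly what guarantees $\beta'\in[0,1)$. Writing out the definition~\eqref{eq:3} of the higher order risk measure at level $\beta'$ and noting $1-\beta'=\frac{1-\beta}{1-\alpha}$, i.e.\ $\frac{1}{1-\beta'}=\frac{1}{1-\frac{\beta-\alpha}{1-\alpha}}$, converts the identity $(\AVaR_\alpha)_{\beta'}=\AVaR_\beta$ into~\eqref{eq:15}.

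The argument is essentially arithmetic once Theorem~\ref{thm:Spectral} is available; the only points demanding care are confirming that $u_\beta\ge\alpha$ so that $\sigma$ takes its non-zero constant value on the entire relevant range (which is what makes the closed form for $u_\beta$ clean), and keeping track of the two distinct roles played by the symbol $\beta$, namely the higher order parameter in~\eqref{eq:14} versus the prescribed output level in~\eqref{eq:15}, since conflating them would scramble the reparametrization and the constant $\frac{1}{1-\beta'}$.
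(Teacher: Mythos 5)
Your proposal is correct and follows essentially the same route as the paper: compute $u_\beta=1-(1-\alpha)(1-\beta)$ from~\eqref{eq:Quantile} for $\sigma=\frac{1}{1-\alpha}\one_{[\alpha,1]}$, identify $\sigma_\beta$ as the spectral function of $\AVaR_{u_\beta}$, and obtain~\eqref{eq:15} by the substitution $\beta\mapsto\frac{\beta-\alpha}{1-\alpha}$. Your explicit care about $u_\beta\ge\alpha$ and the two roles of $\beta$ is a welcome addition, but the argument is the same.
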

\begin{proof}
  The spectral function of the average value-at-risk is $\sigma_\alpha(\cdot)=\frac{\one_{\cdot\ge\alpha}}{1-\alpha}$.
  It follows from~\eqref{eq:Quantile} that $u_\beta=\alpha+\beta(1-\alpha)=1-(1-\alpha)(1-\beta)$
  and $(\sigma_{\alpha})_{\beta}=\begin{cases} 0 & \text{if }u\le u_{\beta},\\ \frac{1}{(1-\alpha)(1-\beta)} & \text{else}. \end{cases}$ \ This is the spectral function of the average value-at-risk at risk level $u_\beta$, thus the result.

  The assertion~\eqref{eq:15} follows by replacing $\beta$ with $\frac{\beta-\alpha}{1-\alpha}$ in~\eqref{eq:14}.
\end{proof}
\begin{corollary}[Kusoka representation spectral risk measures]\label{cor:mu}Suppose the risk functional is
  \begin{equation}\label{eq:16}
  \mathcal{R}(Y)=\int_0^1\AVaR_\gamma(Y)\,\mu(\mathrm{d}\gamma),
  \end{equation}
  where $\mu$ is a probability measure on $[0,1]$. Then the higher order risk measure is
  \[
  \mathcal{R}_\beta(Y)=\int_0^1\AVaR_\gamma(Y)\,\mu_\beta(\mathrm{d}\gamma),
  \]
  where $\mu_{\beta}(\cdot)$ is the measure
  \begin{equation}\label{eq:mu}
  \mu_\beta(A)\coloneqq p_0\cdot\delta_{u_\beta}(A)+\frac{1}{1-\beta}\mu\bigl(A\cap(u_\beta,1 ]\bigr)
  \end{equation}
  and $u_{\beta}$ and $p_{0}$ are determined by the equation and definition
  \begin{equation}\label{eq:17}
  \int_0^{u_\beta}\frac{u_\beta-\alpha}{1-\alpha}\mu(\mathrm{d}\alpha)=\beta\ \text{ and }\ p_{0}\coloneqq\frac{1-u_\beta}{1-\beta}\int_0^{u_\beta}\frac{\mu(\mathrm{d}\alpha)}{1-\alpha}.
  \end{equation}
\end{corollary}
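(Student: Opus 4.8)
The plan is to reduce everything to Theorem~\ref{thm:Spectral} by first identifying the spectral function of the Kusuoka functional~\eqref{eq:16}. Since $\AVaR_\gamma$ has spectral function $u\mapsto\frac{\one_{u\ge\gamma}}{1-\gamma}$, I would substitute $\AVaR_\gamma(Y)=\int_0^1\frac{\one_{u\ge\gamma}}{1-\gamma}F_Y^{-1}(u)\,\mathrm{d}u$ into~\eqref{eq:16} and interchange the order of integration (Fubini, after splitting $F_Y^{-1}$ into positive and negative parts, or directly for $Y\in L^1$). This yields $\mathcal{R}(Y)=\int_0^1\sigma(u)F_Y^{-1}(u)\,\mathrm{d}u$ with
\[
\sigma(u)=\int_{[0,u]}\frac{\mu(\mathrm{d}\gamma)}{1-\gamma},
\]
which is nonnegative, non-decreasing and — by a second Fubini giving $\int_0^1\sigma=\int_{[0,1]}\frac{1-\gamma}{1-\gamma}\,\mu(\mathrm{d}\gamma)=1$ — a genuine spectral function in the sense of Definition~\ref{def:Spectral}. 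Here I would assume $\mu(\{1\})=0$, so that $\frac1{1-\gamma}$ is $\mu$-integrable; the boundary atom corresponds to $\esssup$ and is treated separately.

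With $\sigma$ in hand, Theorem~\ref{thm:Spectral} immediately gives $\mathcal{R}_\beta=\mathcal{R}_{\sigma_\beta}$ for $\sigma_\beta$ as in~\eqref{eq:sigma}. Two computations then remain. First, I would evaluate the quantile condition~\eqref{eq:Quantile}: applying Fubini to $\int_0^{u_\beta}\sigma(u)\,\mathrm{d}u$ over the region $\{0\le\gamma\le u\le u_\beta\}$ produces $\int_{[0,u_\beta]}\frac{u_\beta-\gamma}{1-\gamma}\,\mu(\mathrm{d}\gamma)$, so that $\int_0^{u_\beta}\sigma=\beta$ is exactly the defining equation for $u_\beta$ in~\eqref{eq:17}. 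Second, I would show that $\sigma_\beta$ is the spectral function of the measure $\mu_\beta$ from~\eqref{eq:mu}, i.e.\ $\sigma_\beta(u)=\int_{[0,u]}\frac{\mu_\beta(\mathrm{d}\gamma)}{1-\gamma}$. Since the risk functional depends on $\mu_\beta$ only through this spectral function (again by the first Fubini, applied to $\mu_\beta$ in place of $\mu$), establishing this identity finishes the proof.

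For the identity I would compute the right-hand side directly from~\eqref{eq:mu}. For $u<u_\beta$ neither the atom at $u_\beta$ nor the part of $\mu$ on $(u_\beta,1]$ contributes, so the integral vanishes, matching $\sigma_\beta(u)=0$. For $u\ge u_\beta$ it equals $\frac{p_0}{1-u_\beta}+\frac{1}{1-\beta}\int_{(u_\beta,u]}\frac{\mu(\mathrm{d}\gamma)}{1-\gamma}$, and splitting $\sigma(u)=\int_{[0,u_\beta]}\frac{\mu(\mathrm{d}\gamma)}{1-\gamma}+\int_{(u_\beta,u]}\frac{\mu(\mathrm{d}\gamma)}{1-\gamma}$ shows this coincides with $\frac{\sigma(u)}{1-\beta}=\sigma_\beta(u)$ precisely when $\frac{p_0}{1-u_\beta}=\frac{1}{1-\beta}\int_{[0,u_\beta]}\frac{\mu(\mathrm{d}\gamma)}{1-\gamma}$, which is the definition of $p_0$ in~\eqref{eq:17}. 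I expect the main obstacle to be the measure-theoretic bookkeeping rather than any deep idea: the sub-$u_\beta$ mass of $\mu$ must be collapsed onto the single atom $p_0\,\delta_{u_\beta}$ (reflecting that $\sigma_\beta$ is flat-zero below $u_\beta$ and jumps by exactly $\frac{p_0}{1-u_\beta}$ there), and one has to keep the endpoint conventions consistent — the atom at $u_\beta$ is included in the $[0,u_\beta]$-integral but excluded from the $(u_\beta,1]$-integral — especially when $\mu$ itself charges $u_\beta$. Finally, that $\mu_\beta$ is a probability measure follows for free, since $\int_0^1\sigma_\beta=1$ forces $\mu_\beta([0,1])=1$ via the mass computation above.
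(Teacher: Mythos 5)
Your proposal is correct and follows essentially the same route as the paper: both identify the spectral function of the mixture as $\sigma(u)=\int_{[0,u]}\frac{\mu(\mathrm d\gamma)}{1-\gamma}$, reduce to Theorem~\ref{thm:Spectral}, and verify that the quantile condition becomes~\eqref{eq:17} and that the spectral function of $\mu_\beta$ equals $\sigma_\beta$ via the defining identity for $p_0$. Your treatment is somewhat more careful on the measure-theoretic side (explicit Fubini, endpoint conventions, the caveat $\mu(\{1\})=0$, and deducing $\mu_\beta([0,1])=1$ from $\int_0^1\sigma_\beta=1$ rather than computing it directly), but these are refinements of the same argument, not a different one.
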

\begin{proof}
  Above all, $\mu_{\beta}$ is a probability measure, because $p_{0}\ge0$ and
  \begin{align*}
  \mu_\beta([0,1]) & =p_0+\frac{1}{1-\beta}\int_{u_\beta+}^1\mu(\mathrm{d}\alpha)\\
  & =\frac{1}{1-\beta}\int_0^{u_\beta}\frac{1-\alpha-(u_{\beta}-\alpha)}{1-\alpha}\mu(\mathrm{d}\alpha)+\frac{1}{1-\beta}\int_{u_\beta+}^1\mu(\mathrm{d}\alpha)\\
  & =\frac{1}{1-\beta}\int_0^1\mu(\mathrm{d}\alpha)-\frac{\beta}{1-\beta}=1.
  \end{align*}
  The spectral function of the average value-at-risk at risk level~$\alpha$ is $\sigma_\alpha(\cdot)=\frac{\one_{\cdot\ge\alpha}}{1-\alpha}$.
  The quantile condition~\eqref{eq:Quantile} thus is
  \[	\beta=\int_0^1\frac{\max(0,u_\beta-\alpha)}{1-\alpha}\mu(\mathrm{d}\alpha)
  \]
  and thus~\eqref{eq:17}.

  For $u<u_\beta$, the spectral function corresponding to the measure $\mathcal{R}_\beta$ in~\eqref{eq:16} is~$0$, which coincides with~\eqref{eq:sigma}. For $u>u_{\beta}$, the spectral function for $\mathcal{R}_{\beta}$ is
  \begin{align*}
  \frac{p_0}{1-u_\beta}\one_{u\ge u_\beta}+\int_{u_\beta}^1\frac{1}{1-\beta}\frac{\one_{u\ge\alpha}}{1-\alpha}\mu(\mathrm{d}\alpha) & =\frac{1}{1-\beta}\int_{0}^{u_{\beta}}\frac{\one_{u\ge u_\beta}}{1-\alpha}\mu(\mathrm{d}\alpha)+\int_{u_\beta}^1\frac{1}{1-\beta}\frac{\one_{u\ge\alpha}}{1-\alpha}\mu(\mathrm{d}\alpha)\\
  & =\frac{1}{1-\beta}\int_0^1\frac{\mu(\mathrm{d}\alpha)}{1-\alpha},
  \end{align*}
  which is the desired result in light of~\eqref{eq:sigma}.
\end{proof}
In situations of practical interest, the risk measure is often given as finite combination of average values-at-risk at varying levels.
The following corollary addresses this situation explicitly.
\begin{corollary}\label{cor:AVaR}
  Suppose that
  \begin{equation}\label{eq:18}
    \mathcal{R}(Y)=\sum_{i=1}^n p_i\cdot\AVaR_{\alpha_i}(Y)
  \end{equation}
  with $p_i\ge0$,  $\sum_{i=1}^n p_i=1$ and $\alpha_i\in[0,1]$ for $i=1,\dots,n$. Then
  \begin{equation}\label{eq:17-2}
    \mathcal{R}_\beta(Y)= p_0\cdot\AVaR_{u_\beta}(Y)+\sum_{i:\alpha_i>u_\beta}\frac{p_i}{1-\beta}\AVaR_{\alpha_i}(Y),
  \end{equation}
  where $u_\beta$ satisfies $\beta=\sum_{i=1}^np_i\frac{\max(0,u_\beta-\alpha_i)}{1-\alpha_i}$ and $p_0\coloneqq\sum_{i:\alpha_i\le u_\beta}\frac{p_i}{1-\alpha_i}\frac{1-u_\beta}{1-\beta}$.

  For large risk levels $\beta$, specifically if
  \begin{equation}\label{eq:17-4}
    		\beta\ge1-\Bigl(1-\max_{i=1,\dots,n}\alpha_i\Bigr)\cdot\sum_{i=1}^n\frac{p_i}{1-\alpha_i},
  \end{equation}
  the involved risk measure~\eqref{eq:17-2} collapses to the average value-at-risk, it holds that
  \[	\mathcal{R}_\beta(Y)=\AVaR_{1-(1-\tilde\alpha)(1-\beta)}(Y), \]
  where $\tilde\alpha$ is the weighed risk quantile $\tilde\alpha\coloneqq\frac{\sum_{i=1}^n\frac{p_i}{1-\alpha_i}\alpha_i}{\sum_{i=1}^n\frac{p_i}{1-\alpha_i}}$.
\end{corollary}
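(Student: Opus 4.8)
The first assertion, the representation~\eqref{eq:17-2}, is simply the specialization of Corollary~\ref{cor:mu} to the discrete Kusuoka measure $\mu=\sum_{i=1}^n p_i\,\delta_{\alpha_i}$. The plan is to substitute this $\mu$ into the displays~\eqref{eq:mu}--\eqref{eq:17}: the integral $\int_0^{u_\beta}\frac{u_\beta-\alpha}{1-\alpha}\,\mu(\mathrm d\alpha)$ becomes $\sum_{i:\alpha_i\le u_\beta}p_i\frac{u_\beta-\alpha_i}{1-\alpha_i}=\sum_{i=1}^n p_i\frac{\max(0,u_\beta-\alpha_i)}{1-\alpha_i}$, which is exactly the stated quantile equation; the constant $p_0$ reduces to the announced $p_0=\sum_{i:\alpha_i\le u_\beta}\frac{p_i}{1-\alpha_i}\frac{1-u_\beta}{1-\beta}$; and $\mu_\beta=p_0\,\delta_{u_\beta}+\frac1{1-\beta}\,\mu(\cdot\cap(u_\beta,1])$ turns into a point mass at $u_\beta$ together with rescaled masses $\frac{p_i}{1-\beta}$ at those $\alpha_i>u_\beta$. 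Integrating $\AVaR_\gamma(Y)$ against this discrete $\mu_\beta$ then yields~\eqref{eq:17-2} verbatim, with no work beyond bookkeeping.

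For the large-$\beta$ regime the decisive step is to show that condition~\eqref{eq:17-4} is \emph{equivalent} to $u_\beta\ge\max_i\alpha_i$, so that every index falls into the $\delta_{u_\beta}$-atom and the second sum in~\eqref{eq:17-2} is empty. Since the left-hand side of the quantile equation is non-decreasing in $u_\beta$, it suffices to evaluate $\beta$ at the threshold $u_\beta=\alpha_{\max}\coloneqq\max_i\alpha_i$, where all $\max(0,\alpha_{\max}-\alpha_i)=\alpha_{\max}-\alpha_i$. Writing $S\coloneqq\sum_i\frac{p_i}{1-\alpha_i}$ and $T\coloneqq\sum_i\frac{p_i\alpha_i}{1-\alpha_i}$, this gives $\beta=\alpha_{\max}\,S-T$. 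The single identity that makes everything collapse is
\[	S-T=\sum_{i=1}^n\frac{p_i(1-\alpha_i)}{1-\alpha_i}=\sum_{i=1}^n p_i=1,
\]
so the threshold value equals $\alpha_{\max}S-(S-1)=1-(1-\alpha_{\max})\,S$, which is precisely the bound~\eqref{eq:17-4}.

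Once $u_\beta\ge\alpha_{\max}$ is secured the argument finishes quickly. The empty second sum leaves $\mathcal R_\beta(Y)=p_0\,\AVaR_{u_\beta}(Y)$, and because $\mu_\beta$ is a probability measure (established in Corollary~\ref{cor:mu}) its now unique atom must carry mass $p_0=1$; hence $\mathcal R_\beta(Y)=\AVaR_{u_\beta}(Y)$. It remains to identify the level. Solving the quantile equation (now with every index active) as $u_\beta S-T=\beta$ gives $u_\beta=(\beta+T)/S$, and inserting $S=T+1$ together with $\tilde\alpha=T/S$ rearranges this to $u_\beta=\beta+\frac TS(1-\beta)=1-(1-\tilde\alpha)(1-\beta)$, which is the claimed risk level.

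I expect the only genuinely delicate point to be the threshold computation of the second paragraph---confirming that the somewhat opaque bound~\eqref{eq:17-4} is exactly the break-even level at which $u_\beta$ reaches $\alpha_{\max}$. Everything else, including the identification $u_\beta=1-(1-\tilde\alpha)(1-\beta)$, is short algebra resting on the single normalization identity $S-T=1$.
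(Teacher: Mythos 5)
Your proposal is correct and follows essentially the same route as the paper: both reduce \eqref{eq:17-2} to Corollary~\ref{cor:mu} with the discrete measure $\mu=\sum_i p_i\,\delta_{\alpha_i}$, and both identify the threshold \eqref{eq:17-4} by evaluating the (monotone) quantile equation at $u_\beta=\max_i\alpha_i$ using the normalization $\sum_i p_i=1$ (your identity $S-T=1$). The only cosmetic difference is that you deduce $p_0=1$ from $\mu_\beta$ being a probability measure, whereas the paper computes it directly; both are valid.
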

\begin{proof}
	The result corresponds to the measure $\mu=\sum_{i=1}^np_i\,\delta_{\alpha_i}$ in~\eqref{eq:16}, which is a special case in Corollary~\ref{cor:mu}.

	For $u_\beta\ge\alpha_i$, $i=1,\dots,n$, it holds that
  $\beta=\sum_{i=1}^n p_i\frac{u_\beta-\alpha_i}{1-\alpha_i}= \sum_{i=1}^np_i\frac{1-\alpha_i-(1-u_\beta)}{1-\alpha_i} =1-(1-u_\beta)\sum_{i=1}^n\frac{p_i}{1-\alpha_i}$, so that $u_\beta\ge\max_{i=1,\dots,n}\alpha_i$ is equivalent to~\eqref{eq:17-4}.
	It follows that
  \begin{align}
    u_\beta & =1-\frac{1-\beta}{\sum_{i=1}^n \frac{p_i}{1-\alpha_i}}\label{eq:19} \\
       & =1-(1-\beta)\left(1-\frac{\sum_{i=1}^n\frac{p_i}{1-\alpha_i}-\sum_{i=1}^n\frac{p_i(1-\alpha_i)}{1-\alpha_i}}{\sum_{i=1}^n\frac{p_i}{1-\alpha_i}}\right) \\
       & =1-(1-\beta)(1-\tilde\alpha),
  \end{align}
	and $p_0= \sum\frac{p_i}{1-\alpha_i}\frac{1-u_\beta}{1-\beta}=1$, thus the result with~\eqref{eq:17-2}.
\end{proof}
\begin{remark}
  Corollary~\ref{cor:AVaR-1} is a special case of~\eqref{eq:17-2} in the preceding corollary, as $\tilde\alpha= \alpha$ in this case.
\end{remark}

The following statement generalizes the statements from and provides the higher order risk functional for general risk measures.
\begin{theorem}[Kusuoka representation of higher order risk measures]\label{thm:Kusuoka}
  Let $\mathcal{R}$ be a law invariant risk measure with Kusuoka representation
  \begin{equation}
  \mathcal{R}(Y)=\sup_{\mu\in\mathcal{M}}\mathcal{R}_\mu(Y).\label{eq:Kusuoka}
  \end{equation}
  The higher order risk measure is
  \[	\mathcal{R}_\beta(Y)=\sup_{\mu\in\mathcal{M}}\mathcal{R}_{\mu_\beta}(Y),
  \]
  where the truncated measures $\mu_\beta$ are given in~\eqref{eq:mu}.
\end{theorem}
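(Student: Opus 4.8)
The plan is to unfold the higher order risk measure of $\mathcal{R}$ from its definition, insert the Kusuoka representation inside, and then reduce the claim to the spectral case already handled in Corollary~\ref{cor:mu} by interchanging an infimum and a supremum. Using the norm $\|Z\|=\mathcal{R}(|Z|)$ induced by $\mathcal{R}$ (cf.\ \eqref{eq:4}) and $(Y-t)_+\ge0$, the defining formula \eqref{eq:3} reads $\mathcal{R}_\beta(Y)=\inf_{t\in\mathbb R}\ t+\frac{1}{1-\beta}\mathcal{R}\bigl((Y-t)_+\bigr)$. Since $(Y-t)_+\ge0$ and $\mathcal{R}(\cdot)=\sup_{\mu\in\mathcal{M}}\mathcal{R}_\mu(\cdot)$, I would substitute $\mathcal{R}\bigl((Y-t)_+\bigr)=\sup_{\mu\in\mathcal{M}}\mathcal{R}_\mu\bigl((Y-t)_+\bigr)$ to obtain $\mathcal{R}_\beta(Y)=\inf_{t\in\mathbb R}\sup_{\mu\in\mathcal{M}}g(t,\mu)$, where $g(t,\mu)\coloneqq t+\frac{1}{1-\beta}\mathcal{R}_\mu\bigl((Y-t)_+\bigr)$.

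First I would record the unconditional inequality: weak duality gives $\sup_\mu\inf_t g\le\inf_t\sup_\mu g$, and Corollary~\ref{cor:mu} identifies the inner infimum as $\inf_t g(t,\mu)=(\mathcal{R}_\mu)_\beta(Y)=\mathcal{R}_{\mu_\beta}(Y)$, so that $\sup_{\mu\in\mathcal{M}}\mathcal{R}_{\mu_\beta}(Y)\le\mathcal{R}_\beta(Y)$ holds with no extra hypotheses. The entire content of the theorem is therefore the minimax identity $\inf_t\sup_\mu g=\sup_\mu\inf_t g$, after which Corollary~\ref{cor:mu} closes the argument.

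To prove the interchange I would invoke Sion's minimax theorem. Assume first that $Y$ is bounded, and note that $g$ is convex in $t$ for each $\mu$ (because $t\mapsto(Y-t)_+$ is convex pointwise and $\mathcal{R}_\mu$ is monotone and subadditive/positively homogeneous, hence convex, while $t$ itself is linear) and affine in $\mu$ for each $t$, with $\mu\mapsto\int_0^1\AVaR_\gamma\bigl((Y-t)_+\bigr)\,\mu(\mathrm d\gamma)$ weak$^*$ continuous since $\gamma\mapsto\AVaR_\gamma\bigl((Y-t)_+\bigr)$ is bounded and continuous on $[0,1]$ for bounded $Y$. The compactness required by Sion's theorem I would supply on the $t$-side: exactly as in the proof of Theorem~\ref{thm:Spectral}, for $\beta>0$ the objective is nonincreasing for $t\le\essinf Y$ and nondecreasing for $t\ge\esssup Y$ (by translation equivariance of $\mathcal{R}_\mu$, resp.\ $\mathcal{R}$), so both $\inf_t\sup_\mu g$ and every inner infimum $\inf_t g(t,\mu)$ are attained in the fixed compact interval $[\essinf Y,\esssup Y]$, and the infimum over $\mathbb R$ may be replaced by the infimum over this interval throughout. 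Taking $\mathcal{M}$ convex, which is no loss of generality in the Kusuoka representation since $\mathcal{R}_\mu(Y)$ is affine in $\mu$, Sion's theorem yields $\inf_t\sup_\mu g=\sup_\mu\inf_t g$, and combining with Corollary~\ref{cor:mu} gives $\mathcal{R}_\beta(Y)=\sup_{\mu\in\mathcal{M}}\mathcal{R}_{\mu_\beta}(Y)$ for bounded $Y$. I expect the verification of these hypotheses, above all the reduction to a compact $t$-domain and the weak$^*$ semicontinuity/concavity in $\mu$, to be the main obstacle.

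Finally I would remove the boundedness assumption by the approximation used at the end of the proof of Proposition~\ref{prop:1}: for unbounded $Y$ choose bounded $Y_\varepsilon\ge Y$ with $\|Y_\varepsilon-Y\|<\varepsilon$, sandwich $\mathcal{R}_\mu\bigl((Y_\varepsilon-t)_+\bigr)-\varepsilon\le\mathcal{R}_\mu\bigl((Y-t)_+\bigr)\le\mathcal{R}_\mu\bigl((Y_\varepsilon-t)_+\bigr)$ uniformly in $\mu$, and pass to the limit $\varepsilon\downarrow0$ in both the infimum representation of $\mathcal{R}_\beta$ and the supremum representation $\sup_\mu\mathcal{R}_{\mu_\beta}$, which are continuous in $Y$ by the Lipschitz estimate established earlier.
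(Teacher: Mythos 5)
Your proposal is correct and follows essentially the same route as the paper: both prove the easy inequality $\sup_\mu\mathcal{R}_{\mu_\beta}(Y)\le\mathcal{R}_\beta(Y)$ by the min--max (weak duality) inequality combined with Corollary~\ref{cor:mu} for the inner infimum, and both obtain the reverse inequality from Sion's minimax theorem applied to $(t,\mu)\mapsto t+\frac{1}{1-\beta}\mathcal{R}_\mu\bigl((Y-t)_+\bigr)$, which is convex in $t$ and affine in $\mu$. The only divergence is where the compactness hypothesis of Sion's theorem is supplied: the paper takes it on the measure side, invoking Prokhorov's theorem for the weak$^*$ (sequential) compactness of the probability measures on $[0,1]$, whereas you take it on the $t$-side by restricting to $[\essinf Y,\esssup Y]$ for bounded $Y$ and then removing boundedness by the approximation from Proposition~\ref{prop:1}. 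Both are admissible since Sion's theorem needs only one of the two sets compact; your version is somewhat longer but also more explicit about two points the paper glosses over, namely the convexity of $\mathcal{M}$ (handled by passing to the convex hull, harmless by affinity of $\mu\mapsto\mathcal{R}_\mu$) and the treatment of unbounded $Y$.
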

\begin{proof}
  For the risk functional defined in~\eqref{eq:Kusuoka} it follows from the min-max inequality that
  \begin{align}
  \mathcal{R}_\beta(Y) & =\inf_{t\in\mathbb{R}}t+\sup_{\mu\in\mathcal{M}}\mathcal{R}_\mu\bigl((Y-t)_+\bigr)\nonumber \\
  & \ge\sup_{\mu\in\mathcal{M}}\inf_{t\in\mathbb{R}}\ t+\frac1{1-\beta}\mathcal{R}_\mu\bigl((Y-t)_+\bigr)\label{eq:18-1}\\
  & =\sup_{\mu\in\mathcal{M}}(\mathcal{R}_\mu)_\beta(Y)=\nonumber \\
  & =\sup_{\mu\in\mathcal{M}}\mathcal{R}_{\mu_\beta}(Y),\nonumber 
  \end{align}
  where we have used Corollary~\ref{cor:mu}.

  For the reverse inequality in~\eqref{eq:18-1} consider the function
  \[	(t,\mu)\mapsto t+\mathcal{R}_\mu\bigl((Y-t)_+\bigr)
  \]
  on $\mathbb{R}\times \mathcal{M}([0,1])$, where $\mathcal M([0,1])$ collects the probability measures on $[0,1]$ (with its Borel $\sigma$\nobreakdash-algebra).
  By its definition~\eqref{eq:16}, this function is linear in $\mu$, and convex in $t$, where $t\in\mathbb{R}$ and $\mu$ is a measure on $[0,1]$.
  By Prokhorov’s theorem, the set~$\mathcal{M}([0,1])$ of probability measures is sequentially compact, as $[0,1]$ is compact.
  From Sion’s minimax theorem (cf.\ \citet{Sion}) it follows that equality holds in~\eqref{eq:18-1}. Thus, the result.
\end{proof}
The preceding Theorem~\ref{thm:Kusuoka} provides an explicit characterization for the general higher order risk measure.
The following section exploits this representation to characterize general stochastic dominance relations.

\section{General stochastic dominance relations}\label{sec:StochasticDominance}
As Section~\ref{sec:Framework} mentions, the risk measure $\mathcal R$ defines a norm via the setting $\|\cdot\|\coloneqq \mathcal R(|\cdot|)$ (cf.~\eqref{eq:4}), and conversely, the norm $\|\cdot\|$ defines a risk measure via $\mathcal R_\beta^{\|\cdot\|}$, cf.~\eqref{eq:3}.
In what follows we connect  a specific stochastic dominance relation with the norm.
This stochastic dominance relation can be described by higher order risk measures, developed in the preceding Section~\ref{sec:Spectral}.

We start by defining the stochastic dominance relation based on a monotone norm.
\begin{definition}[Stochastic dominance]\label{def:Dominance}
  Let $X$, $Y\in\mathcal{Y}$ be random variables in a Banach space $(\mathcal{Y},\|\cdot\|)$. The random variable~$X$ is \emph{dominated} by $Y$, denoted
  \[ X\preccurlyeq^{\|\cdot\|}Y,
  \]
  if
  \begin{equation}\label{eq:domain}
    \|(t-X)_+\|\ge\|(t-Y)_+\|\ \text{ for all }t\in\mathbb{R}.
  \end{equation}
  If the norm is unambiguous from the context, we shall also simply write $\preccurlyeq$ instead of $\preccurlyeq^{\|\cdot\|}$.
\end{definition}

The cone of random variables triggered by a single variable is convex.
\begin{lemma}[Convexity of the stochastic dominance cone]
  For $X\in\mathcal{Y}$ given, the set
  \[	\left\{ Y\colon X\preccurlyeq Y\right\}
  \]
  is convex.
\end{lemma}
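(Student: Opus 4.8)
The plan is to verify the defining inequality of $\preccurlyeq$ directly on a convex combination. Fix $X\in\mathcal Y$ and suppose $X\preccurlyeq Y_0$ and $X\preccurlyeq Y_1$, so that $\|(t-Y_i)_+\|\le\|(t-X)_+\|$ for $i\in\{0,1\}$ and every $t\in\mathbb R$. For $\lambda\in[0,1]$ I set $Y_\lambda\coloneqq\lambda Y_1+(1-\lambda)Y_0$ and aim to show $\|(t-Y_\lambda)_+\|\le\|(t-X)_+\|$ for all $t$, which is exactly the statement $X\preccurlyeq Y_\lambda$.

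The key structural observation I would exploit is that the map $y\mapsto(t-y)_+$ is convex, being the pointwise maximum of the two affine functions $0$ and $t-y$. Writing $t-Y_\lambda=\lambda(t-Y_1)+(1-\lambda)(t-Y_0)$ and applying this convexity almost everywhere yields $(t-Y_\lambda)_+\le\lambda(t-Y_1)_++(1-\lambda)(t-Y_0)_+$, an inequality between two nonnegative random variables.

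I then chain the three standing properties of the norm. Since the right-hand side dominates the nonnegative left-hand side pointwise, monotonicity of $\|\cdot\|$ gives $\|(t-Y_\lambda)_+\|\le\|\lambda(t-Y_1)_++(1-\lambda)(t-Y_0)_+\|$. The triangle inequality together with positive homogeneity bounds this by $\lambda\|(t-Y_1)_+\|+(1-\lambda)\|(t-Y_0)_+\|$, and the two dominance hypotheses finally bound it by $\lambda\|(t-X)_+\|+(1-\lambda)\|(t-X)_+\|=\|(t-X)_+\|$. This is the required inequality for every $t$, so $X\preccurlyeq Y_\lambda$ and the cone is convex.

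The argument is short, and I do not anticipate a serious obstacle; the one point demanding care is that it genuinely uses \emph{all} of the assumptions on $\|\cdot\|$ simultaneously—monotonicity to pass from the pointwise convexity estimate to a norm estimate, and subadditivity with homogeneity to split the norm of the convex combination. The only thing to watch is the orientation of the pointwise inequality: the plus function must sit \emph{below} the convex combination, so that monotonicity of the norm can be invoked in the correct direction.
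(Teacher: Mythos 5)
Your argument is correct and follows essentially the same route as the paper: pointwise convexity of $y\mapsto(t-y)_+$, then monotonicity of the norm, then the triangle inequality with positive homogeneity, then the two dominance hypotheses. No gaps.
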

\begin{proof}
  The map $y\mapsto(t-y)_+$ is convex, as follows from reflecting and translating the convex function $x\mapsto x_+$.
  Suppose that $X\preccurlyeq Y_0$ and $X\preccurlyeq Y_1$. Then it follows for $Y_\lambda\coloneqq(1-\lambda)Y_0+\lambda\,Y_1$, together with monotonicity of the norm and~\eqref{eq:domain}, that
  \begin{align*}
  \|(t-Y_\lambda)_+\| & \le\left\Vert \bigl((1-\lambda)(t-Y_0)+\lambda(t-Y_{1})\bigr)_{+}\right\Vert \\
  & \le(1-\lambda)\|(t-Y_0)_+\|+\lambda\|(t-Y_1)_+\|\\
  & \le(1-\lambda)\|(t-X)_+\|+\lambda\|(t-X)_+\|\\
  & =\|(t-X)_+\|.
  \end{align*}
  That is, it holds that $X\preccurlyeq Y_\lambda$ and thus the assertion.
\end{proof}

\subsection{Characterization of stochastic dominance relations\label{sec:Characterization}}
Stochastic dominance relations can be fully characterized by higher order risk measures.
The following theorem presents this main result, which integrates the details developed above for these risk functionals and stochastic dominance relations.
\begin{theorem}[Characterization of stochastic dominance]\label{thm:3}The following are equivalent:
  \begin{enumerate}
    \item\label{enu:6}$X\preccurlyeq^{\|\cdot\|}Y$,
    \item\label{enu:5}$\mathcal{R}_\beta(-X)\ge\mathcal{R}_\beta(-Y)$ for all $\beta\in[0,1)$, and
    \item\label{enu:7}$\inf_{Z\in\mathcal{Z}_\beta}\E ZX\le\inf_{Z\in\mathcal{Z}_\beta}\E ZY$ for every $\beta\in(0,1)$, where
      \begin{equation}
        \mathcal{Z}_\beta
        	\coloneqq
       	 \left\{ 	Y\in\mathcal{X}^*\colon\|Z\|_*\le\frac1{1-\beta},\ \E Z=1,\ Z\ge0\right\}
      \end{equation}
    is the positive cone ($Z\ge0$) in the dual ball with radius $\frac{1}{1-\beta}$ ($\|Z\|_\ast\le \frac1{1-\beta}$), intersected with the simplex ($\E Z=1$).
  \end{enumerate}
\end{theorem}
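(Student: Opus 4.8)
The plan is to translate the whole statement into the language of the single convex function $h_X(t)\coloneqq\|(t-X)_+\|$ and to recognise the family $\{\mathcal R_\beta(-X)\}_{\beta}$ as (the negative of) a Fenchel transform of $h_X$. First I would record the structural properties of $h_X$: it is nonnegative, non-decreasing and convex (the last because $t\mapsto(t-X)_+$ is pointwise convex and the norm is monotone), it is finite and hence continuous on $\mathbb R$, it satisfies $h_X(t)\to0$ as $t\to-\infty$, and it has asymptotic slope $\|\one\|=1$ since $0\le(t-X)_+\le t\,\one+(-X)_+$ for $t\ge0$. In this notation, condition~\ref{enu:6} is exactly $h_X\ge h_Y$ pointwise on $\mathbb R$.

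Next, the substitution $s=-t$ in~\eqref{eq:3} gives
\[ \mathcal R_\beta(-X)=\inf_{s\in\mathbb R}\ -s+\frac1{1-\beta}\,h_X(s). \]
The implication \ref{enu:6}$\Rightarrow$\ref{enu:5} is then immediate: if $h_X\ge h_Y$, then for every fixed $s$ and $\beta$ the objective for $X$ dominates that for $Y$, and taking the infimum over $s$ preserves the inequality. The equivalence \ref{enu:5}$\Leftrightarrow$\ref{enu:7} I would read directly off the dual representation~\eqref{eq:6} of Proposition~\ref{prop:1}: since $\mathcal R_\beta(-X)=\sup_{Z\in\mathcal Z_\beta}\E(-X)Z=-\inf_{Z\in\mathcal Z_\beta}\E ZX$, the inequality $\mathcal R_\beta(-X)\ge\mathcal R_\beta(-Y)$ is literally $\inf_{Z\in\mathcal Z_\beta}\E ZX\le\inf_{Z\in\mathcal Z_\beta}\E ZY$. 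The only mismatch is the endpoint $\beta=0$, present in~\ref{enu:5} but excluded in~\ref{enu:7}, which I would recover by letting $\beta\downarrow0$ and using continuity of $\beta\mapsto\mathcal R_\beta$.

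The substantive direction is the reverse implication \ref{enu:5}$\Rightarrow$\ref{enu:6}, which I would obtain by inverting the transform. Multiplying the displayed formula by $1-\beta>0$ and writing $p=1-\beta$ yields
\[ (1-\beta)\,\mathcal R_\beta(-X)=\inf_{s}\bigl(h_X(s)-p\,s\bigr)=-h_X^*(p), \]
with $h_X^*$ the convex conjugate of $h_X$; hence~\ref{enu:5} is equivalent to $h_X^*(p)\le h_Y^*(p)$ for all $p\in(0,1]$. I would then show that the effective domain of $h_X^*$ is contained in $[0,1]$: letting $s\to-\infty$ shows $h_X^*(p)=+\infty$ for $p<0$ (because $h_X(s)\to0$), letting $s\to+\infty$ shows $h_X^*(p)=+\infty$ for $p>1$ (by the asymptotic slope bound), while $h_X^*(0)=-\inf_s h_X(s)=0$. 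Thus the inequality $h_X^*\le h_Y^*$ in fact holds on the entire effective domain $[0,1]$, and the Fenchel–Moreau theorem ($h_X=h_X^{**}$, valid as $h_X$ is convex, finite and continuous) gives $h_X=h_X^{**}\ge h_Y^{**}=h_Y$, which is~\ref{enu:6}. Combined with \ref{enu:6}$\Rightarrow$\ref{enu:5} and \ref{enu:5}$\Leftrightarrow$\ref{enu:7}, this closes all three equivalences.

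The main obstacle is precisely this last step: the family $\{\mathcal R_\beta\}_{\beta\in[0,1)}$ only determines $h_X^*$ on $(0,1]$, so the argument stands or falls on pinning the effective domain of the conjugate to $[0,1]$. This in turn rests on the normalisation $\|\one\|=1$ (which bounds the slope at $+\infty$) and on the norm being order continuous (so that $h_X(t)\to0$ as $t\to-\infty$, giving $h_X^*(0)=0$); both hold for the $L^p$\nobreakdash-norms and, more generally, for the norm $\|\cdot\|=\mathcal R(|\cdot|)$ induced by a risk functional. I would therefore isolate these two regularity facts as the load-bearing lemmas before invoking biconjugation.
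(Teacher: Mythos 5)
Your proposal is correct, and the decisive step \ref{enu:5}$\Rightarrow$\ref{enu:6} takes a genuinely different route from the paper. The paper argues pointwise: for each fixed $q$ it selects $\alpha\in\partial_\eta\|(\eta-Y)_+\|\big|_{\eta=q}$, sets $\beta=1-\alpha$ so that the infimum defining $\mathcal R_\beta(-Y)$ is attained exactly at $q$, and then sandwiches to read off $\|(q-X)_+\|\ge\|(q-Y)_+\|$; this is, in effect, a hand-made instance of the attainment of the supremum in the biconjugate. You instead package the whole family $\{\mathcal R_\beta(-X)\}_{\beta\in[0,1)}$ as the conjugate $h_X^*$ restricted to $p=1-\beta\in(0,1]$, pin the effective domain of $h_X^*$ to $[0,1]$, and invoke Fenchel--Moreau once. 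Your version is cleaner and makes transparent \emph{why} the risk levels $\beta\in[0,1)$ suffice (they exhaust the effective domain of the conjugate up to the single point $p=0$), whereas the paper's version needs no discussion of effective domains and no limit at the endpoint. The one place where you pay for the abstraction is exactly the point you flag: you secure $h_X^*(0)\le h_Y^*(0)$ by computing $h_X^*(0)=-\inf_s h_X(s)=0$, which requires $\|(s-X)_+\|\to0$ as $s\to-\infty$, i.e.\ order continuity of the norm --- a hypothesis the paper does not impose. This extra assumption is removable: $h_X^*$ is a proper, lower semicontinuous convex function on $\mathbb R$ with effective domain contained in $[0,1]$, and a one-dimensional such function is continuous relative to the closure of its domain, so $h_X^*(0)=\lim_{p\downarrow0}h_X^*(p)\le\lim_{p\downarrow0}h_Y^*(p)=h_Y^*(0)$ follows from the inequality on $(0,1]$ alone. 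With that substitution your argument matches the paper's generality. The remaining parts (the trivial direction \ref{enu:6}$\Rightarrow$\ref{enu:5} by taking infima, and \ref{enu:5}$\Leftrightarrow$\ref{enu:7} via the dual representation of Proposition~\ref{prop:1}, with the endpoint $\beta=0$ recovered by right-continuity) coincide with the paper's proof.
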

\begin{proof}
  Suppose that $X\preccurlyeq^{\|\cdot\|}Y$, then, by definition, $\|(t-X)_+\|\ge\|(t-Y)_+\|$ for every $t\in\mathbb{R}$. It follows that $t+\frac{1}{1-\beta}\|(-X-t)_+\|\ge t+\frac{1}{1-\beta}\|(-Y-t)_+\|$ for all $t\in\mathbb{R}$, and thus assertion~\ref{enu:5} after passing to the infimum.

  As for the contrary, assume that~\ref{enu:5} holds. To demonstrate~\ref{enu:6} note first that $q\mapsto\|(q-X)_+\|$ is convex; indeed, with $q_\lambda\coloneqq(1-\lambda)q_0+\lambda\,q_1$ and $(a+b)_+\le a_+ +b_+$ it holds that
  \[	(q_\lambda-X)_+= \bigl((1-\lambda)(q_0-X)+\lambda(q_1-X)\bigr)_+\le(1-\lambda)(q_0-X)+\lambda(q_1-X)_+
  \]
  and thus
  \[	\|(q_\lambda-X)\|\le(1-\lambda)\cdot\|(q_0-X)_+\|+\lambda\cdot\|(q_1-X)_+\|
  \]
  by the triangle inequality of the norm.

  For $q\in\mathbb{R}$ fixed, choose
  \[\alpha\in\partial_\eta\ \|(\eta-Y)_+\|\Big|_{\eta=q},
  \]
 that is, the subdifferential (of the convex function $\eta\mapsto \|(\eta-Y)_+\|$) evaluated at $\eta=q$, and note that $\alpha\in[0,1]$. Set $\beta\coloneqq 1-\alpha$, and observe that
  \[	0\in\partial_q\ -q+\frac1{1-\beta}\|(q-Y)_+\|
  \]
  by~\eqref{eq:3} so that
  \[	\mathcal{R}_\beta(-Y)=-q+\frac{1}{1-\beta}\|(q-Y)_+\|.
  \]
  Employing the definition~\eqref{eq:3} again and assumption~\ref{enu:5}, it follows that
  \begin{align*}
	  -q+\frac{1}{1-\beta}\|(-X+q)_+\| & \ge\mathcal{R}_{\beta}(-X)\\
  & \ge\mathcal{R}_\beta(-Y)\\
  & =-q+\frac{1}{1-\beta}\|(q-Y)_+\|,
  \end{align*}
  or equivalently
  \[	\|(q-X)_+\|\ge\|(q-Y)_+\|.
  \]
  The assertion~\ref{enu:6} follows, as $q\in\mathbb{R}$ was arbitrary; this establishes equivalence of~\ref{enu:6} and~\ref{enu:5}.

  Finally, let $\beta\in(0,1)$. With~\ref{enu:5} and Proposition~\ref{prop:1} we have that
  \[	\inf_{Z\in\mathcal{Z}_\beta}\E ZX\le\inf_{Z\in\mathcal{Z}_\beta}\E ZY,
  \]
  where the infimum in both expressions is among $Z\in\mathcal{Z}_\beta= \left\{ Z\in\mathcal{Z}\colon\ \|Z\|_*\le\frac{1}{1-\beta}\right\} $, as the set $\mathcal Z_\beta$ collects the constraints in~\eqref{eq:6}.
  This establishes equivalence between~\ref{enu:5} and~\ref{enu:7}.
\end{proof}
\begin{remark}
  The quantity $-\mathcal{R}(-Y)\eqqcolon\mathcal{A}(Y)$ arising naturally in Theorem~\ref{thm:3}~\ref{enu:5} above is often called an \emph{acceptability functional}, cf.\ \citet{PflugRomisch2007}.
\end{remark}

\begin{corollary}
  Suppose that
  \begin{equation}
	  \E ZX\le\E ZY\text{ for all }Z\in\mathcal{Z}\coloneqq\bigcup_{\beta\in(0,1)}\mathcal{Z}_{\beta},\label{eq:20}
  \end{equation}
  then $X$ is dominated by $Y$, $X\preccurlyeq^{\|\cdot\|}Y$. Further, the assertion~\eqref{eq:20} is equivalent to
  \begin{equation}\label{eq:21}
	  \mathcal{R}_\beta(X-Y)\le 0\quad\text{for all }\beta\in(0,1).
  \end{equation}
\end{corollary}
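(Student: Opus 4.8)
The plan is to read both assertions off the two characterizations already established: the sufficiency of~\eqref{eq:20} for dominance will follow from Theorem~\ref{thm:3}, while the equivalence of~\eqref{eq:20} and~\eqref{eq:21} will follow from the duality representation~\eqref{eq:6} in Proposition~\ref{prop:1}. Neither part should require anything beyond combining these with an elementary monotonicity of the infimum.

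First I would verify that~\eqref{eq:20} implies condition~\ref{enu:7} of Theorem~\ref{thm:3}. Fix $\beta\in(0,1)$. Since $\mathcal Z_\beta\subseteq\mathcal Z$, the hypothesis gives $\E ZX\le\E ZY$ for every individual $Z\in\mathcal Z_\beta$, whence
\[
\inf_{Z'\in\mathcal Z_\beta}\E Z'X\le\E ZX\le\E ZY\qquad\text{for each }Z\in\mathcal Z_\beta.
\]
Passing to the infimum over $Z\in\mathcal Z_\beta$ on the right-hand side yields $\inf_{Z\in\mathcal Z_\beta}\E ZX\le\inf_{Z\in\mathcal Z_\beta}\E ZY$, and as $\beta$ was arbitrary this is exactly~\ref{enu:7}. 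Theorem~\ref{thm:3} then delivers $X\preccurlyeq^{\|\cdot\|}Y$.

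For the stated equivalence I would apply the duality~\eqref{eq:6} to the random variable $X-Y$, which expresses the higher order risk measure as a supremum over $\mathcal Z_\beta$:
\[
\mathcal R_\beta(X-Y)=\sup_{Z\in\mathcal Z_\beta}\E Z(X-Y)=\sup_{Z\in\mathcal Z_\beta}\bigl(\E ZX-\E ZY\bigr).
\]
Consequently $\mathcal R_\beta(X-Y)\le0$ holds if and only if $\E ZX\le\E ZY$ for every $Z\in\mathcal Z_\beta$. Imposing this for all $\beta\in(0,1)$ is the same as requiring $\E ZX\le\E ZY$ for all $Z\in\bigcup_{\beta\in(0,1)}\mathcal Z_\beta=\mathcal Z$, which is precisely~\eqref{eq:20}; this establishes the equivalence of~\eqref{eq:20} and~\eqref{eq:21}.

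The one point deserving attention is the infimum step in the first part: the hypothesis~\eqref{eq:20} is the \emph{pointwise} inequality over every $Z$, which is formally stronger than the inequality of infima in~\ref{enu:7}, so only the (needed) forward implication is used, and it is immediate once $\inf_{Z'\in\mathcal Z_\beta}\E Z'X$ is bounded by $\E ZX$ before infimizing the right-hand side. Otherwise I anticipate no genuine obstacle, since both Theorem~\ref{thm:3} and the duality~\eqref{eq:6} are already in hand; the supremum identity for $\mathcal R_\beta(X-Y)$ is merely~\eqref{eq:6} restated for the difference $X-Y$.
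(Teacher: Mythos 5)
Your proposal is correct and follows essentially the same route as the paper: the first claim is reduced to condition~\ref{enu:7} of Theorem~\ref{thm:3} by passing to infima over $\mathcal Z_\beta$, and the equivalence of~\eqref{eq:20} and~\eqref{eq:21} is read off the dual supremum representation of $\mathcal R_\beta$ applied to $X-Y$. Your write-up is merely a bit more explicit about the infimum step and cites~\eqref{eq:6} where the paper cites~\eqref{eq:5}, which is the same duality.
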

\begin{proof}
  Fix $\beta\in(0,1)$, then $\inf_{Z\in\mathcal{Z}_\beta}\E ZX\le\inf_{Z\in\mathcal{Z}_\beta}\E ZY$ by~\eqref{eq:20}. With~\ref{enu:7} in the preceding Theorem~\ref{thm:3} it follows that $X\preccurlyeq Y$.

  With~\eqref{eq:5}, the statement~\eqref{eq:21} is equivalent with $\E Z(X-Y)\le0$ for $Z\in\mathcal{Z}$ and hence the assertion.
\end{proof}
\begin{remark}
  The assertion~\eqref{eq:21}, however, is \emph{strictly stronger} than~\ref{enu:5} in Theorem~\ref{thm:3}. Indeed, it follows with convexity and~\eqref{eq:21} that
  \[	\mathcal{R}(-Y)\le\mathcal{R}(X-Y)+\mathcal{R}(-X)\le\mathcal{R}(-X), \]
  and hence~\ref{enu:5}, the assertion, although the reverse implication does not hold true.
\end{remark}

\subsection{Stochastic dominance in numerical computations}\label{sec:662}
  To verify that $X\preccurlyeq Y$ it is necessary to verify the defining condition~\eqref{eq:domain} for every $t\in \mathbb R$. These infinitely many comparisons are intractable for numerical computations.
  The same holds true for the equivalent characterization~\ref{enu:5} in Theorem~\ref{thm:3},
  as all risk levels $\beta\in (0,1)$ -- again infinitely many -- need to be considered. This is difficult, perhaps impossible to ensure in numerical computations.

  In what follows we develop an equivalent characterization, which builds on only \emph{finitely many} risk levels. With this,  the comparison $X\preccurlyeq Y$ is numerically tractable.

  We start the exposition with the following lemma on convexity (concavity).
\begin{lemma}\label{lem:150}
    For~$Y$ fixed in the domain of $\mathcal R$, the mapping \[\beta\mapsto (1-\beta)\cdot\mathcal R_\beta(Y)\] is concave.
\end{lemma}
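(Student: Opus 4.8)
The plan is to unwind the definition~\eqref{eq:3} of $\mathcal R_\beta$ and exploit the fact that, once the factor $(1-\beta)$ is pulled inside the infimum, the objective becomes \emph{affine} in $\beta$ for each fixed scenario parameter~$t$. Concretely, since $\beta\in[0,1)$ forces $1-\beta>0$, multiplying the defining infimum by the positive scalar $(1-\beta)$ does not alter the minimizing argument, and I would write
\[
  (1-\beta)\cdot\mathcal R_\beta(Y)
  = (1-\beta)\cdot\inf_{t\in\mathbb R}\ \Bigl(t+\tfrac{1}{1-\beta}\|(Y-t)_+\|\Bigr)
  = \inf_{t\in\mathbb R}\ \bigl((1-\beta)\,t+\|(Y-t)_+\|\bigr).
\]
This is the only algebraic manipulation required, and the proposition established earlier guarantees the quantity on the left is finite (well-defined) throughout $[0,1)$.

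The key observation is that, for each fixed $t\in\mathbb R$, the map
\[
  \beta\ \longmapsto\ (1-\beta)\,t+\|(Y-t)_+\|
\]
is affine in $\beta$: the term $\|(Y-t)_+\|$ carries no dependence on $\beta$, and $(1-\beta)\,t$ is linear in~$\beta$. Consequently $\beta\mapsto(1-\beta)\mathcal R_\beta(Y)$ is expressed as a pointwise infimum of a family of affine functions indexed by $t\in\mathbb R$. Invoking the standard fact that an infimum of affine (indeed, of concave) functions is concave then yields the claim immediately; I would state this principle explicitly and conclude.

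There is essentially no serious obstacle here — the proof is short — but I would flag the one point that must be handled with care: the interchange of the scalar factor $(1-\beta)$ with the infimum is legitimate precisely \emph{because} $1-\beta$ is strictly positive on $[0,1)$, so that minimization is preserved (a negative factor would turn the infimum into a supremum and the argument would collapse). I would also note, for completeness, that finiteness of $\mathcal R_\beta(Y)$ on the whole interval (supplied by~\eqref{eq:13-4}) is what lets us speak of an honest real-valued concave function rather than one taking the value $-\infty$; this requires $Y$ to lie in the domain of $\mathcal R$, exactly the hypothesis of the lemma.
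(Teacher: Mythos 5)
Your proof is correct and follows essentially the same route as the paper: both rest on the identity $(1-\beta)\mathcal R_\beta(Y)=\inf_{t}\bigl((1-\beta)t+\|(Y-t)_+\|\bigr)$ and the fact that the objective is affine in $\beta$ for each fixed $t$. If anything, your appeal to the general principle that a pointwise infimum of affine functions is concave is slightly cleaner than the paper's hands-on verification, which implicitly assumes the minimizer $t_\lambda$ is attained.
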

\begin{proof}
  For $\lambda\in(0,1)$, define $\beta_\lambda\coloneqq (1-\lambda)\beta_0+\lambda\,\beta_1$. Choose $t_\lambda$ in~\eqref{eq:3} minimizing $\mathcal R_{\beta_\lambda}(Y)$. Then
  \begin{align}
    (1-\beta_\lambda)\mathcal R_{\beta_\lambda}(Y)  &= (1-\beta_\lambda) t_\lambda + \|(Y-t_\lambda)_+\| \\
                                &= (1-\lambda)\bigl((1-\beta_0)t_\lambda+ \|(Y-t_\lambda)_+\|\bigr) +\lambda\bigl((1-\beta_1)t_\lambda+\|(Y-t_\lambda)_+\|\bigr) \\
                                & \ge(1-\lambda)(1-\beta_0)\,\mathcal R_{\beta_0}(Y)+ \lambda(1-\beta_1)\,\mathcal R_{\beta_1}(Y)
  \end{align}
  by taking the infimum in $\mathcal R_\beta$ in the latter expressions.
\end{proof}
The latter result on convexity leads to the following result on the derivative of the risk functional with respect to the risk level.
\begin{theorem}\label{thm:673}
  For $Y$ in the domain of $\mathcal R$ and $\beta\in(0,1)$, the derivative with respect to the risk rate is
  \[ {\mathrm d \over \mathrm d\beta} \mathcal R_\beta(Y)= \frac{\mathcal R_\beta(Y)-t_Y(\beta)}{1-\beta}, \]
  where $t_Y(\beta)$ minimizes the higher order risk measure $\mathcal R_\beta(Y)$, cf.~\eqref{eq:3}.
\end{theorem}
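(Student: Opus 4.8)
The plan is to pass to the rescaled functional $g(\beta)\coloneqq(1-\beta)\,\mathcal{R}_\beta(Y)$, already shown to be concave in Lemma~\ref{lem:150}, compute its derivative, and then recover $\frac{\mathrm d}{\mathrm d\beta}\mathcal{R}_\beta(Y)$ by the product rule. The decisive observation is that, by the definition~\eqref{eq:3},
\[ g(\beta)=\inf_{t\in\mathbb{R}}\ \bigl((1-\beta)\,t+\|(Y-t)_+\|\bigr)=\inf_{t\in\mathbb{R}} h(t,\beta), \]
where for each fixed~$t$ the map $\beta\mapsto h(t,\beta)=(1-\beta)t+\|(Y-t)_+\|$ is \emph{affine} with slope~$-t$. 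Thus $g$ is a pointwise infimum of affine functions of~$\beta$ — which re-explains its concavity — and the slope of each touching affine piece is governed precisely by the corresponding minimizer~$t$.

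First I would invoke the envelope (Danskin) principle for a lower envelope of affine functions: wherever $g$ is differentiable, the derivative equals the slope of the active affine piece, so that
\[ g'(\beta)=\partial_\beta h(t,\beta)\big|_{t=t_Y(\beta)}=-t_Y(\beta), \]
with $t_Y(\beta)$ the minimizer attaining $g(\beta)$. Differentiating $g(\beta)=(1-\beta)\,\mathcal{R}_\beta(Y)$ by the product rule gives
\[ g'(\beta)=-\mathcal{R}_\beta(Y)+(1-\beta)\,\frac{\mathrm d}{\mathrm d\beta}\mathcal{R}_\beta(Y). \]
Equating the two expressions for $g'(\beta)$ and solving for the derivative of $\mathcal{R}_\beta$ yields exactly
\[ \frac{\mathrm d}{\mathrm d\beta}\mathcal{R}_\beta(Y)=\frac{\mathcal{R}_\beta(Y)-t_Y(\beta)}{1-\beta}, \]
which is the claimed identity.

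The main obstacle is the rigorous justification of the envelope formula $g'(\beta)=-t_Y(\beta)$, that is, interchanging differentiation in~$\beta$ with the infimum in~$t$. I would avoid a direct dominated-convergence estimate and argue instead through concavity: as a concave function of one real variable, $g$ is automatically differentiable outside an at most countable set and possesses one-sided derivatives everywhere, with superdifferential $\partial g(\beta)=[\,g'_+(\beta),\,g'_-(\beta)\,]$. The affine-envelope structure pins this superdifferential to the slopes $\{-t\colon t \text{ attains the infimum in } g(\beta)\}$, so that at every point of differentiability — in particular whenever the minimizer $t_Y(\beta)$ is uniquely determined — the stated identity holds. A brief closing remark would record that at the (countably many) exceptional points the formula persists for the one-sided derivatives, upon selecting $t_Y(\beta)$ as the corresponding left or right minimizer.
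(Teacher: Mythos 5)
Your proposal is correct and follows essentially the same route as the paper: both pass to the rescaled functional $(1-\beta)\,\mathcal{R}_\beta(Y)$, identify its derivative as $-t_Y(\beta)$ by an envelope argument, and recover the claim via the product rule. If anything, your justification of the envelope step --- through concavity of the pointwise infimum of affine functions and its superdifferential --- is more careful than the paper's, which differentiates $f\bigl(\beta,t(\beta)\bigr)$ by the chain rule and thereby tacitly assumes $t(\beta)$ is itself differentiable in $\beta$.
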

\begin{proof}
    As in Lemma~\ref{lem:150} above consider the objective
    \begin{equation}\label{eq:22}
    f(\beta,t)\coloneqq (1-\beta)t+\|(Y-t)_+\|.
    \end{equation}
      For $\beta$ fixed, the objective is concave, and we may choose $t(\beta)$ in the subgradient, $t(\beta)\in \partial_t f(\beta,t)$ .
      Define the function \[ f(\beta)\coloneqq f\bigl(\beta,t(\beta)\bigr)= \min_{t\in\mathbb R}\  f(\beta,t(\beta)) = (1-\beta)\,\mathcal R_\beta(Y).\]
      It holds that
      \[ f^\prime(\beta)= {\partial \over \partial\beta} f\bigl(\beta,t(\beta)\bigr)+ {\partial \over \partial t} f\bigl(\beta,t(\beta)\bigr)\cdot t^\prime(\beta).\]
      But $0\in {\partial \over \partial t} f\bigl(\beta,t(\beta)\bigr)$, as $t(\beta)$ is optimal in~\eqref{eq:3}.
      With ${\partial \over \partial\beta} f(\beta,t)= -t$, it follows that
      \begin{equation}\label{eq:44}
        f^\prime(\beta)= -t(\beta).
      \end{equation}
      It follows that \[{\mathrm d \over \mathrm d\beta}\mathcal R_\beta(Y)={\mathrm d\over \mathrm d\beta} {f(\beta) \over 1-\beta}= {-t(\beta)(1-\beta)+ f(\beta) \over (1-\beta)^2}
        = {\mathcal R_\beta(Y)-t(\beta) \over 1-\beta},\]
      the assertion.
\end{proof}
\begin{theorem}[Verification of stochastic dominance relations]\label{thm:140}
  To verify that $X\preccurlyeq Y$ it is sufficient to verify
  \begin{align}\label{eq:23}   \mathcal R_{\beta_i}(-X)\ge \mathcal R_{\beta_i}(-Y)
  \shortintertext{for the (\emph{finitely many}) risk levels}
     \beta_i,\quad i=1,\dots, n;
    \end{align}
  the risk levels $\beta_i$  with $\beta_i< \gamma_i< \beta_{i+1}$ are chosen so that
  \begin{align}
    t_{-X}(\beta)&\le t_{-Y}(\beta) \text{ for } \beta\in (\beta_i,\gamma_i) \text{ and}\label{eq:24}\\
     t_{-X}(\beta)&\ge t_{-Y}(\beta) \text{ for } \beta\in (\gamma_i,\beta_{i+1})\label{eq:25}
  \end{align}
  for $i=1,2,\dots,n$.
\end{theorem}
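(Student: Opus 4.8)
The plan is to reduce the claim to the pointwise criterion already available and then to extract the global sign of a suitable difference from its derivative. By the equivalence of~\ref{enu:6} and~\ref{enu:5} in Theorem~\ref{thm:3}, the relation $X\preccurlyeq Y$ holds precisely when $\mathcal R_\beta(-X)\ge \mathcal R_\beta(-Y)$ for \emph{every} $\beta\in[0,1)$. I would therefore introduce
\[
  D(\beta)\coloneqq (1-\beta)\bigl(\mathcal R_\beta(-X)-\mathcal R_\beta(-Y)\bigr),
\]
and note that, since $1-\beta>0$, the target inequality at level $\beta$ is equivalent to $D(\beta)\ge0$. The reason for carrying the factor $1-\beta$ is that $\beta\mapsto(1-\beta)\,\mathcal R_\beta(\cdot)$ is exactly the concave object of Lemma~\ref{lem:150}, for which the clean derivative formula~\eqref{eq:44} was derived.

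Next I would record the regularity and the derivative of $D$. By Lemma~\ref{lem:150} the maps $\beta\mapsto(1-\beta)\,\mathcal R_\beta(-X)$ and $\beta\mapsto(1-\beta)\,\mathcal R_\beta(-Y)$ are concave on $(0,1)$, hence locally Lipschitz and locally absolutely continuous; consequently $D$ is continuous and absolutely continuous on every compact subinterval. Applying~\eqref{eq:44} to $-X$ and to $-Y$ gives their derivatives as $-t_{-X}(\beta)$ and $-t_{-Y}(\beta)$ at each point of differentiability, so that
\[
  D'(\beta)=t_{-Y}(\beta)-t_{-X}(\beta)\qquad\text{for a.e.\ }\beta.
\]
The hypotheses~\eqref{eq:24} and~\eqref{eq:25} now translate directly into sign information: $D'\ge0$ on $(\beta_i,\gamma_i)$ and $D'\le0$ on $(\gamma_i,\beta_{i+1})$. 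Integrating these inequalities, which is legitimate because $D$ is absolutely continuous, I conclude that $D$ is non-decreasing on $[\beta_i,\gamma_i]$ and non-increasing on $[\gamma_i,\beta_{i+1}]$; in other words, on each cell $[\beta_i,\beta_{i+1}]$ the function $D$ is unimodal with its maximum at $\gamma_i$.

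The conclusion then follows from this tent-shape. Unimodality on a cell forces the minimum of $D$ over $[\beta_i,\beta_{i+1}]$ to sit at one of the two endpoints, so that $\min_{[\beta_i,\beta_{i+1}]}D=\min\bigl(D(\beta_i),D(\beta_{i+1})\bigr)$. The finitely many verifications~\eqref{eq:23} are precisely $\mathcal R_{\beta_i}(-X)\ge\mathcal R_{\beta_i}(-Y)$, that is, $D(\beta_i)\ge0$ at every checked node. Since each $\beta\in[0,1)$ lies in some cell and the minimum over that cell is a node value, it follows that $D\ge0$ throughout $[0,1)$, which by the first paragraph is exactly $\mathcal R_\beta(-X)\ge\mathcal R_\beta(-Y)$ for all $\beta$, hence $X\preccurlyeq Y$.

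The delicate point I expect to confront is the bookkeeping at the two ends of the partition together with the set-valuedness of the minimizers. The non-uniqueness of $t_{-X}(\beta)$ and $t_{-Y}(\beta)$ is harmless: it can occur only on the countable set where the concave functions fail to be differentiable, so it does not affect the almost-everywhere integral argument, and the inequalities~\eqref{eq:24}--\eqref{eq:25} are to be read on the full-measure set of differentiability points. For the endpoints I would take the partition to cover $[0,1)$ with $\beta_1=0$ and $\beta_{n+1}=1$; then the nodes $\beta_1,\dots,\beta_n$ checked in~\eqref{eq:23} account for every cell except for the right endpoint $\beta_{n+1}=1$, and that one endpoint is automatic because $(1-\beta)\,\mathcal R_\beta(\cdot)\to0$ as $\beta\to1$ (the minimizing $t$ in~\eqref{eq:3} tends to $\esssup(\cdot)$, at least for bounded variables), whence $D(\beta)\to0$. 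Assembling the cellwise minima with this limiting value yields $D\ge0$ on all of $[0,1)$.
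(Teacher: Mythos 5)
Your proposal is correct and follows essentially the same route as the paper: both rest on the derivative formula $f'(\beta)=-t(\beta)$ for $f(\beta)=(1-\beta)\,\mathcal R_\beta$ from Theorem~\ref{thm:673}, integrate the sign conditions~\eqref{eq:24}--\eqref{eq:25} over each subinterval, and anchor the resulting monotonicity at the checked nodes~\eqref{eq:23}; packaging this as unimodality of the single difference $D=f_{-X}-f_{-Y}$ rather than comparing $f_{-X}$ and $f_{-Y}$ from the two endpoints is only a cosmetic reformulation. Your extra care about absolute continuity, the a.e.\ set-valuedness of the minimizers, and the right endpoint $\beta_{n+1}=1$ (where the paper tacitly needs $f_{-X}(\beta_{n+1})\ge f_{-Y}(\beta_{n+1})$) is a welcome tightening of details the paper leaves implicit.
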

\begin{remark}\label{rem:217}
  For the crucial risk levels (test points) $\beta_i$ in the preceding Theorem~\ref{thm:140} it holds that
  \[  t_{-X}(\beta)\ge t_{-Y}(\beta) \text{ for }\beta\in (\gamma_{i-1},\beta_i)\text{ and } t_{-X}(\beta)\le t_{-Y}(\beta)\text{ for }\beta\in (\beta_i,\gamma_i),\]
  so that the curves $t_{-X}(\cdot)$ and $t_{-Y}(\cdot)$ intersect at~$\beta_i$.
  For continuous $t_{-X}(\cdot)$ and $t_{-Y}(\cdot)$, the crucial points in the preceding Theorem~\ref{thm:140} are among the points, where the minimizers of $\mathcal R_\beta(-X)$ and $\mathcal R_\beta(-Y)$ coincide, that is,
  \[ t_{-X}(\beta_i) = t_{-Y}(\beta_i), \quad i=1,\dots,n;\]
  with a similar argument it holds as well that
  \[ t_{-X}(\gamma_i) = t_{-Y}(\gamma_i), \quad i=1,\dots,n.\]
\end{remark}

\begin{proof}[Proof of Theorem~\ref{thm:140}]
  Recall first from Theorem~\ref{thm:3}~\ref{enu:5} that $X\preccurlyeq Y$ is equivalent to
  \[\mathcal R_\beta(-X)\ge\mathcal R_\beta(-Y)\text{ for all }\beta\in(0,1).\]

  Assuming~\eqref{eq:23}, we demonstrate that $\mathcal R_\beta(-X) \ge \mathcal R_\beta(-Y)$ for $\beta\in (\beta_i,\beta_{i+1})$ for $i=1,\dots,n$.
  To this end we distinguish the following two cases, where we reuse the notation introduced in the proof of Theorem~\ref{thm:673}:
  \begin{enumerate}
    \item Suppose that $\beta\in (\beta_i,\gamma_i)$, then, by~\eqref{eq:24}, $t_{-X}(\beta)\le t_{-Y}(\beta)$.
          With~\eqref{eq:44} we conclude that
       \begin{align}
      f_{-X}(\beta)&= f_{-X}(\beta_i)+\int_{\beta_i}^\beta f_{-X}^\prime(\gamma)\,\mathrm d\gamma\\
                &=  f_{-X}(\beta_i)-\int_{\beta_i}^\beta t_{-X}(\gamma)\,\mathrm d\gamma\\
                &\ge  f_{-Y}(\beta_i)-\int_{\beta_i}^\beta t_{-Y}(\gamma)\,\mathrm d\gamma  \label{eq:26}\\
                &= f_{-Y}(\beta_i)+\int_{\beta_i}^\beta f_{-Y}^\prime(\gamma)\,\mathrm d\\
                &= f_{-Y}(\beta),
    \end{align} where we have used~\eqref{eq:23} in~\eqref{eq:26}. It follows that $\mathcal R_\beta(-X)\ge \mathcal R_\beta(-Y)$ for all $\beta\in(\beta_i,\gamma_i)$.
    \item If $\beta\in (\gamma_i,\beta_{i+1})$, then $t_{-X}(\beta)\ge t_{-Y}(\beta)$ by assumption~\eqref{eq:25}. It holds that
    \begin{align}
      f_{-X}(\beta) &= f_{-X}(\beta_{i+1})- \int_\beta^{\beta_{i+1}}f^\prime_{-X}(\gamma)\mathrm d\gamma\\
      &= f_{-X}(\beta_{i+1})+ \int_\beta^{\beta_{i+1}}t_{-X}(\gamma)\mathrm d\gamma \\
      &\ge f_{-Y}(\beta_{i+1})+ \int_\beta^{\beta_{i+1}}t_{-Y}(\gamma)\mathrm d\gamma \label{eq:27}\\
      &= f_{-Y}(\beta_{i+1})- \int_\beta^{\beta_{i+1}}f^\prime_{-Y}(\gamma)\mathrm d\gamma \\
      &= f_{-Y}(\beta),
    \end{align}
    where again~\eqref{eq:23} was used in~\eqref{eq:27}. It follows that $\mathcal R_\beta(-X)\ge\mathcal R_\beta(-Y)$ for $\beta\in(\gamma_i,\beta_{i+1})$, the remaining case.
  \end{enumerate}
  Combining the two cases above we find that $\mathcal R_\beta(-X)\ge \mathcal R_\beta(-Y)$ for all $\beta\in (\beta_i,\beta_{i+1})$.
  The assertion for all $\beta\in \mathbb R$ thus follows by considering $i=1,\dots,n$.
\end{proof}
\begin{remark}\label{rem:759}
  It is important to note that the critical risk levels $\beta_i$, $i=1,\dots,n$, in~\eqref{eq:23}~-- in general~-- depend on \emph{both} random variables, on $X$ \emph{and} $Y$.
  The folloiwing remark presents a notable exception to this rule.
\end{remark}
\begin{remark}[Average value-at-risk]\label{rem:225}
  For the Hölder norm $\|\cdot\|_1$ with $p=1$, the optimizers are $t_{-X}(\beta)=\VaR_\beta(-X)$, and the function $t_{-X}(\cdot)$ is not continuous for discrete distributions. As $\beta\mapsto t_{-X}(\beta)$ is non-decreasing, the test points are
  \[  \beta_i= -x_i,\ i=1,2,\dots,n,\] where $P(X=x_i)>0$.
  In this case, the crucial risk levels $\beta_i$, $i=1,\dots,n$, are independent from the variable $Y$ when comparing $X\preccurlyeq Y$.
\end{remark}

\subsection{Characterization of stochastic dominance for spectral risk measures}
The following builds on the spectral risk measure $\mathcal R_\sigma(\cdot)$ introduced in Definition~\ref{def:Spectral} and considers the norm
\[	\|\cdot\|_\sigma \coloneqq \mathcal R_\sigma(|\cdot|)
\] for the spectral function $\sigma$.
Theorem~\ref{thm:3} and the characterization of higher order spectral risk measures (Theorem~\ref{thm:Spectral}) give rise to the following result.
\begin{theorem}
  The stochastic dominance relation
  \[
  X\preccurlyeq^{\|\cdot\|_\sigma}Y
  \]
  with respect to the norm associated with the spectral risk measure
  $\mathcal{R}_\sigma$ is equivalent to
  \begin{align*}
  \MoveEqLeft[7]-\sigma_p\VaR_p(Y)+\int_{-\infty}^{\VaR_p(Y)}\Sigma\bigl(S_Y(y)\bigr)\,\mathrm{d}y\\
  & \le-\sigma_p\VaR_p(X)+\int_{-\infty}^{\VaR_p(X)}\Sigma\bigl(S_X(x)\bigr)\,\mathrm{d}x\quad \text{for all }p\in(0,1),
  \end{align*}
  where $\sigma_p\coloneqq\int_{1-p}^1 \sigma(u)\,\mathrm{d}u$ and
  $S_X(x)\coloneqq1-F_X(x)= P(X>x)$ is the \emph{survival function} of the random variable~$X$.
\end{theorem}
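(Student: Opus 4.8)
The plan is to combine the higher-order characterization of dominance (Theorem~\ref{thm:3}\ref{enu:5}) with the explicit spectral formula~\eqref{eq:11}, applied to the reflected variables $-X$ and $-Y$. The starting observation is that for the norm $\|\cdot\|_\sigma=\mathcal R_\sigma(|\cdot|)$, and since $(W-t)_+\ge0$, one has $\|(W-t)_+\|_\sigma=\mathcal R_\sigma\bigl((W-t)_+\bigr)$, so the higher order risk measure $\mathcal R_\beta^{\|\cdot\|_\sigma}(W)$ is exactly the higher order spectral risk $(\mathcal R_\sigma)_\beta(W)=\mathcal R_{\sigma_\beta}(W)$ of Theorem~\ref{thm:Spectral}. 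By Theorem~\ref{thm:3}\ref{enu:5}, the relation $X\preccurlyeq^{\|\cdot\|_\sigma}Y$ is equivalent to $\mathcal R_\beta(-X)\ge\mathcal R_\beta(-Y)$ for all $\beta$, so it suffices to transcribe this family of inequalities into the asserted form.

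Next I would apply~\eqref{eq:11} to $W=-Y$ and substitute $w=-y$ in the tail integral. Using the reflection identities $\VaR_{u_\beta}(-Y)=-\VaR_{1-u_\beta}(Y)$ and $F_{-Y}(-y)=P(Y\ge y)$ (which agrees with $S_Y(y)$ off the atoms of $Y$), the term $\int_{\VaR_{u_\beta}(-Y)}^\infty\Sigma\bigl(F_{-Y}(w)\bigr)\,\mathrm dw$ becomes $\int_{-\infty}^{\VaR_p(Y)}\Sigma\bigl(S_Y(y)\bigr)\,\mathrm dy$ with $p\coloneqq1-u_\beta$. The quantile condition~\eqref{eq:Quantile} then reads $1-\beta=\Sigma(u_\beta)=\Sigma(1-p)=\sigma_p$, so multiplying through by $\sigma_p=1-\beta>0$ yields
\[
  \sigma_p\,\mathcal R_\beta(-Y)=-\sigma_p\VaR_p(Y)+\int_{-\infty}^{\VaR_p(Y)}\Sigma\bigl(S_Y(y)\bigr)\,\mathrm dy,
\]
which is precisely the left-hand side of the claimed inequality, and likewise for $X$. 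Since $\sigma_p>0$ for $p\in(0,1)$, the inequality $\mathcal R_\beta(-Y)\le\mathcal R_\beta(-X)$ is equivalent to the asserted inequality at the matching level $p$.

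It then remains to match the ranges. The map $p\mapsto\beta=1-\sigma_p=1-\Sigma(1-p)$ is continuous and non-increasing, with $\beta\to1$ as $p\to0$ and $\beta\to0$ as $p\to1$; by the intermediate value theorem it maps $(0,1)$ onto $(0,1)$, so quantifying over all $p\in(0,1)$ is the same as quantifying over all $\beta\in(0,1)$, which by Theorem~\ref{thm:3} characterizes $X\preccurlyeq^{\|\cdot\|_\sigma}Y$. On subintervals where $\sigma$ vanishes this map is constant, so several $p$-values share one $\beta$; there I would check that the right-hand side is locally constant in $p$, since differentiating gives a factor $\Sigma\bigl(S_Y(\VaR_p(Y))\bigr)-1=\Sigma(1-p)-1=0$ on such intervals, so the redundant levels merely reproduce one and the same inequality and create no inconsistency.

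The main obstacle is the reflection bookkeeping in the presence of atoms: the generalized inverse, the distinction between lower and upper quantiles (the sharp identity is $\VaR_u(-Y)=-\VaR^+_{1-u}(Y)$), and $F_{-Y}(-y)$ versus $S_Y(y)$ all differ on the countable set of atoms of $Y$. I would dispose of this by noting that the tail integrals are unaffected (altering the integrand on a countable set leaves the Lebesgue integral unchanged) and that the boundary $\VaR$ term combines with the integral into an expression that is continuous in $p$ and independent of the chosen quantile version; this same continuity reconciles the non-unique minimizer $u_\beta$ at the degenerate level $\beta=0$ with the several $p$ mapping to it.
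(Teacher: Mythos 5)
Your proposal is correct and follows essentially the same route as the paper: identify $\mathcal R_\beta^{\|\cdot\|_\sigma}$ with $\mathcal R_{\sigma_\beta}$ via Theorem~\ref{thm:Spectral}, apply the representation~\eqref{eq:11} to $-X$ and $-Y$ with the reflection identities $F_{-Y}(y)=S_Y(-y)$ and $\VaR_\alpha(-Y)=-\VaR_{1-\alpha}(Y)$, substitute $p=1-u_\beta$ so that $1-\beta=\sigma_p$, and invoke Theorem~\ref{thm:3}\ref{enu:5}. Your additional bookkeeping on the surjectivity of $p\mapsto\beta$ and on atoms is a welcome refinement the paper omits (though in your aside on intervals where $\sigma$ vanishes the vanishing factor should be $\Sigma(1-p)-\sigma_p$ rather than $\Sigma(1-p)-1$); it does not change the argument.
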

\begin{proof}
  We argue with the norm~$\|Y\|_\sigma\coloneqq\mathcal{R}_\sigma(|Y|)$.
  Note, that $(Y-t)_+\ge0$, hence the defining equation~\eqref{eq:3} is
  \begin{align}
    \mathcal{R}_\beta^{\|\cdot\|_\sigma}(Y) & =\inf_{t\in\mathbb{R}}  \ t+\frac{1}{1-\beta}\|(Y-t)_+\|_{\sigma}\\
                                                & =\inf_{t\in\mathbb{R}}\ t+\frac{1}{1-\beta}\mathcal{R}_\sigma\bigl((Y-t)_+ \bigr)\\
                                                & =\mathcal{R}_{\sigma_\beta}(Y),\label{eq:28}
  \end{align}
  where we have used Theorem~\ref{thm:Spectral} in~\eqref{eq:28}.

  From~\eqref{eq:11} we have that
  \begin{align*}
    \mathcal{R}_{\beta}(-Y) & =\VaR_{u_\beta}(-Y)+\frac{1}{1-\beta}\int_{\VaR_{u_\beta}(-Y)}^{\infty}\Sigma\bigl(F_{-Y}(y)\bigr)\,\mathrm{d}y\\
                            & =-\VaR_{1-u_\beta}(Y)+\frac{1}{1-\beta}\int_{-\VaR_{1-u_\beta}(Y)}^{\infty}\Sigma\bigl(S_Y(-y)\bigr)\,\mathrm{d}y\\
                            & =-\VaR_{1-u_\beta}(Y)+\frac{1}{1-\beta}\int_{-\infty}^{\VaR_{1-u_\beta}(Y)}\Sigma\bigl(S_Y(y)\bigr)\,\mathrm{d}y,
  \end{align*}
  where we have used that $F_{-Y}(y)=P(-Y\le y)=P(Y\ge-y)= 1-F_Y(-y)=S_Y(-y)$ and $\VaR_\alpha(-Y)=-\VaR_{1-\alpha}(Y)$ at points of continuity of $F_Y(\cdot)$.

  Now set $1-u_{\beta}\eqqcolon p$. Then, by employing the characterizing relation~\eqref{eq:Quantile} for the $\beta$\nobreakdash-quantile of~$\sigma$, it holds that
  \[	1-\beta=\int_{u_\beta}^1 \sigma(u)\,\mathrm{d}u= \int_{1-p}^1 \sigma(u)\,\mathrm{d}u= \sigma_p, \]
  so that
  \[	\mathcal{R}_\beta(-Y)=-\VaR_{p}(Y)+\frac1{\sigma_p}\int_{-\infty}^{\VaR_p(Y)}\Sigma\bigl(S_Y(y)\bigr)\,\mathrm{d}y. \]

  By Theorem~\ref{thm:3}, the relation $X\preccurlyeq^{\|\cdot\|_\sigma}Y$ is equivalent to $\mathcal{R}_\beta^{\|\cdot\|_\sigma}(-Y)\le\mathcal{R}_\beta^{\|\cdot\|_\sigma}(-X)$ for all $\beta\in(0,1)$. With that, the assertion follows.
\end{proof}

\subsection{Higher order stochastic dominance\label{sec:Hoelder}}
A traditional way of introducing stochastic dominance relations is by iterating integrals of the cumulative distribution function. This is a special case for the Hölder norm $\|\cdot\|_p$, $p\in[1,\infty)$.
\begin{definition}[Higher order stochastic dominance, cf.\ \citet{StoyanMueller2002}]
  The random variable~$X$ is dominated by~$Y$ in \emph{first order stochastic dominance}, if
  \[	F_X(x)\ge F_Y(x)\text{ for all }x\in\mathbb{R},
  \]
  where $F_X(x)\coloneqq P(X\le x)$ is the cumulative distribution function. We shall write $X\preccurlyeq^{(1)}Y$. For $p\in[1,\infty]$, the random variable~$X$ is stochastically dominated by $Y$ in \emph{$p$}\textsuperscript{\emph{th}}\emph{\nobreakdash-stochastic order}, if
  \begin{equation}\label{eq:29}
      \E(x-X)_+^{p-1}\ge\E(x-Y)_+^{p-1}\text{ for all }x\in\mathbb{R};
    \end{equation}
we write $X\preccurlyeq^{(p)}Y$.
\end{definition}

\begin{lemma}[Cf.~\citet{Rusz1999, Ruszczynski2001}]\label{lem:1}
  With $F_X^{(1)}(\cdot)\coloneqq F_X(\cdot)$, the $k$\textsuperscript{th} ($k=2,3,\dots$) repeated integral is $F_X^{(k)}(x)\coloneqq\int_{-\infty}^xF_X^{(k-1)}(y)\,\mathrm{d}y$.
  The following two points are equivalent, they characterize stochastic dominance of \emph{integer} orders by repeated integrals:
  \begin{enumerate}
    \item $X\preccurlyeq^{(k)}Y$,
    \item\label{item:2} $F_Y^{(k)}(x)\ge F_X^{(k)}(x)$ for all $x\in\mathbb{R}$.
  \end{enumerate}
\end{lemma}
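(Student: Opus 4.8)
The plan is to prove the equivalence of $X\preccurlyeq^{(k)}Y$ and $F_Y^{(k)}(x)\ge F_X^{(k)}(x)$ for integer orders $k$ by establishing the connection between the repeated integrals $F_X^{(k)}$ and the expectation $\E(x-X)_+^{p-1}$ appearing in the defining condition~\eqref{eq:29}. The central claim I would isolate first is the identity
\begin{equation}\label{eq:plan-key}
  F_X^{(k)}(x)=\frac{1}{(k-1)!}\,\E(x-X)_+^{k-1}\quad\text{for }k=1,2,\dots
\end{equation}
Once~\eqref{eq:plan-key} is in hand, the equivalence is immediate: by the definition~\eqref{eq:29} with $p=k$, the relation $X\preccurlyeq^{(k)}Y$ means $\E(x-X)_+^{k-1}\ge\E(x-Y)_+^{k-1}$ for all $x$, and multiplying by the positive constant $\frac{1}{(k-1)!}$ converts this to $F_X^{(k)}(x)\ge F_Y^{(k)}(x)$, which is exactly item~\ref{item:2}. (Here I would note the harmless sign/labelling convention in the statement, where the roles of $X$ and $Y$ appear swapped relative to~\eqref{eq:29}; I would align the indices so the inequalities match.)

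The main work is therefore proving~\eqref{eq:plan-key}, and I would do this by induction on $k$. For the base case $k=1$, I would verify $F_X^{(1)}(x)=F_X(x)=P(X\le x)=\E\,\one_{\{X\le x\}}=\E(x-X)_+^{0}$, interpreting $(x-X)_+^0$ as the indicator $\one_{\{x-X>0\}}$ (equivalently $\one_{\{X< x\}}$, which agrees with $F_X$ at continuity points and can be handled via the usual right-continuity convention). For the inductive step, assuming~\eqref{eq:plan-key} at level $k-1$, I would compute
\begin{align*}
  F_X^{(k)}(x)
    &=\int_{-\infty}^x F_X^{(k-1)}(y)\,\mathrm dy
     =\frac{1}{(k-2)!}\int_{-\infty}^x \E(y-X)_+^{k-2}\,\mathrm dy\\
    &=\frac{1}{(k-2)!}\,\E\int_{-\infty}^x (y-X)_+^{k-2}\,\mathrm dy
     =\frac{1}{(k-2)!}\,\E\int_{X}^{x} (y-X)^{k-2}\,\mathrm dy,
\end{align*}
where the restriction of the inner integral to $y\ge X$ uses $(y-X)_+^{k-2}=0$ for $y<X$ (valid for $k\ge 2$), and the interchange of $\int$ and $\E$ is justified by Tonelli's theorem since the integrand is nonnegative. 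Evaluating the elementary inner integral gives $\frac{(x-X)^{k-1}}{k-1}$ when $x\ge X$ and $0$ otherwise, i.e.\ $\frac{(x-X)_+^{k-1}}{k-1}$, and folding the factor $\frac{1}{k-1}$ into $\frac{1}{(k-2)!}$ produces $\frac{1}{(k-1)!}$, completing the induction.

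The step I expect to require the most care is the Fubini/Tonelli interchange together with the finiteness of the expectations: to ensure $F_X^{(k)}(x)$ is well-defined and finite one needs an integrability assumption such as $X\in L^{k-1}$ (or at least a finite $(k-1)$st lower moment), which I would state explicitly, consistent with the ambient assumption $\mathcal Y\subset L^1$ and the cited references \citet{Rusz1999, Ruszczynski2001}. The remaining subtlety is purely the $k=1$ boundary convention for $(x-X)_+^0$ and the distinction between strict and non-strict inequality in the indicator; at points of continuity of $F_X$ this is invisible, and for the dominance relation it does not affect the equivalence since~\eqref{eq:29} is required to hold for \emph{all} $x$. With~\eqref{eq:plan-key} established, the proof of the lemma is then a one-line consequence.
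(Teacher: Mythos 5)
Your proof is correct, and it reaches the same pivotal identity as the paper, namely $F_X^{(k)}(x)=\frac{1}{(k-1)!}\,\E(x-X)_+^{k-1}$, but by a different route. The paper invokes Cauchy's formula for repeated integration, $F_X^{(k)}(x)=\frac{1}{(k-2)!}\int_{-\infty}^x(x-y)^{k-2}F_X(y)\,\mathrm{d}y$, and then integrates by parts to convert the integral against $F_X$ into a Riemann--Stieltjes integral $\frac{1}{(k-1)!}\int_{-\infty}^x(x-y)^{k-1}\,\mathrm{d}F_X(y)$, i.e.\ the stated expectation; you instead prove the identity directly by induction on $k$, using Tonelli to interchange $\E$ and $\int$ and evaluating the elementary inner integral. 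Your version is more self-contained (it does not presuppose Cauchy's formula, which is itself usually proved by exactly such an induction or Fubini argument) and it makes explicit two points the paper glosses over: the integrability needed for $F_X^{(k)}(x)$ to be finite, and the $k=1$ boundary convention for $(x-X)_+^0$ versus $\one_{\{X\le x\}}$, which is harmless since the two differ only on the (Lebesgue-null) set of atoms and so do not affect the next integration. You also correctly flag that the orientation of the inequality in item~\ref{item:2} of the lemma is reversed relative to what~\eqref{eq:29} literally yields ($\E(x-X)_+^{k-1}\ge\E(x-Y)_+^{k-1}$ gives $F_X^{(k)}\ge F_Y^{(k)}$); this is an inconsistency in the paper's own labelling that its proof does not address, and your decision to align the indices is the right fix. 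In short: same key lemma, a more elementary and more careful derivation of it.
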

\begin{proof}
  It holds with \href{https://en.wikipedia.org/wiki/Cauchy_formula_for_repeated_integration}{Cauchy’s formula for repeated integration} that
  \begin{equation}
   	 F_X^{(k)}(x)=\frac{1}{(k-2)!}\int_{-\infty}^x(x-y)^{k-2}F_X(y)\,\mathrm{d}y.
  \end{equation}
  By integration by parts, the latter  is
  \begin{equation}
    F_X^{(k)}(x)=\frac{1}{(k-1)!}\int_{-\infty}^x(x-y)^{k-1}\,\mathrm{d}F_X(y),
  \end{equation}
  so that
  \[	F_X^{(k)}(x)=\frac{1}{(k-1)!}\int_{-\infty}^\infty(x-y)_+^{k-1}\mathrm{d}F_{X}(y)=\frac{1}{(k-1)!}\E(x-X)_+^{k-1},
  \]
  from which the assertion follows from the defining condition~\eqref{eq:domain} in Definition~\ref{def:Dominance}.
\end{proof}
\begin{remark}\label{rem:7}
  It follows from the iterated integral and~\ref{item:2} in Lemma~\ref{lem:1} that $X\preccurlyeq^{(k)} Y \implies X\preccurlyeq^{(k+1)}Y$ for all \emph{natural} numbers $k=1,2,\dots$.
  We notice next that
  \begin{equation}\label{eq:30}
    X\preccurlyeq^{(p)} Y \implies X\preccurlyeq^{(p^\prime)}Y\ \text{ for all \emph{real} numbers } 1\le p\le p^\prime \in \mathbb R.
  \end{equation}
  To this end note first that the characterization~\eqref{eq:29} is equivalent to
  \begin{equation}\label{eq:31}
    \int_{-\infty}^x(x-z)^{p-1}\mathrm dF_X(z)
    \ge \int_{-\infty}^x(x-z)^{p-1}\mathrm dF_Y(z)\text{ for all }x\in\mathbb R.
  \end{equation}
	With $\int_z^x(x-y)^{\alpha-1}(y-z)^{\beta-1}\mathrm dy =B(\alpha,\beta) (x-z)^{\beta+\alpha-1}$ ($B$ is Euler’s integral of the first kind) and integration by parts it follows that
  \begin{align}
		\int_{-\infty}^x(x-z)^{p^\prime-1}\, \mathrm d F_X(z)
    &=\frac1{B(p,p^\prime-p)}\int_{-\infty}^x\int_z^x(x-y)^{p^\prime-p-1}(y-z)^{p-1}\, \mathrm d y\,\mathrm d F_X(z)\\
    &=\frac1{B(p,p^\prime-p)}\int_{-\infty}^x(x-y)^{p^\prime-1-p}\int_{-\infty}^y(y-z)^{p-1}\,\mathrm d F_X(z)\,\mathrm d y\\
    &\ge\frac1{B(p,p^\prime-p)}\int_{-\infty}^x(x-y)^{p^\prime-1-p}\int_{-\infty}^x(y-z)^{p-1}\, \mathrm d F_Y(z)\,\mathrm d y\label{eq:32}\\
  &= \int_{-\infty}^x(x-z)^{p^\prime-1}\, \mathrm d F_Y(z),
 \end{align}
 where we have used the characterization~\eqref{eq:31} in~\eqref{eq:32}, as $x-y\ge 0$ and that $B(p,p^\prime-p)$ is well-defined and positive for $p^\prime>p$.
 The assertion again follows with~\eqref{eq:31}.
\end{remark}

\subsection{Comparison of stochastic order relations}
Different stochastic dominance relations may vary in strength (the implication~\eqref{eq:30} in the preceding  Remark~\ref{rem:7} is an example).
In what follows, we provide an explicit relation to compare stochastic dominance relations, which are built on different spectral functions.
\begin{proposition}[Comparison of spectral stochastic orders]\label{prop:Comp}
  Suppose that
  \begin{equation}\label{eq:33}
    \sigma_\mu(u) = \sigma(u)\cdot\int_0^{u_\beta}\frac{\mu(\mathrm{d}\beta)}{1-\beta}
  \end{equation}
  for some probability measure $\mu$, where $u_\beta$ is as defined in~\eqref{eq:Quantile}.
  Then the stochastic order associated with~$\sigma_\mu$  is weaker than the genuine stochastic order associated with~$\sigma$.
  Specifically, for different spectral functions $\sigma$ and $\sigma_\mu$, it holds that
  \[	X\preccurlyeq^{\|\cdot\|_{\sigma}}Y\implies X\preccurlyeq^{\|\cdot\|_{\sigma_{\mu}}}Y.
  \]
\end{proposition}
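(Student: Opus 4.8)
The plan is to recognise the spectral function $\sigma_\mu$ in~\eqref{eq:33} as the $\mu$\nobreakdash-average of the \emph{higher order} spectral functions $\sigma_\beta$ furnished by Theorem~\ref{thm:Spectral}, and then to transport the dominance relation level by level. Concretely, I would first establish the averaging identity
\[
  \sigma_\mu(u)=\int_0^1\sigma_\beta(u)\,\mathrm{d}\mu(\beta),
\]
with $\sigma_\beta$ as in~\eqref{eq:sigma}. Substituting $\sigma_\beta(u)=\frac{\sigma(u)}{1-\beta}\one_{u\ge u_\beta}$ and using the CDF/quantile (Galois) relation $u\ge u_\beta\iff\beta\le\int_0^u\sigma(p)\,\mathrm{d}p$ turns the right-hand side into $\sigma(u)\int\frac1{1-\beta}\,\mathrm{d}\mu(\beta)$, integrated over precisely the range encoded by the upper limit $u_\beta$ in~\eqref{eq:33}. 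Along the way I would check that $\sigma_\mu$ is admissible in the sense of Definition~\ref{def:Spectral}: non-negativity and monotonicity are inherited from each $\sigma_\beta$, and $\int_0^1\sigma_\mu(u)\,\mathrm{d}u=\int_0^1\bigl(\int_0^1\sigma_\beta(u)\,\mathrm{d}u\bigr)\mathrm{d}\mu(\beta)=\int_0^1 1\,\mathrm{d}\mu(\beta)=1$ by Fubini, since each $\sigma_\beta$ integrates to one.

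By Fubini applied to the spectral representation $\mathcal{R}_{\sigma_\mu}(W)=\int_0^1\sigma_\mu(u)F_W^{-1}(u)\,\mathrm{d}u$, the averaging identity lifts to the risk measures,
\[
  \mathcal{R}_{\sigma_\mu}(W)=\int_0^1\mathcal{R}_{\sigma_\beta}(W)\,\mathrm{d}\mu(\beta)=\int_0^1(\mathcal{R}_\sigma)_\beta(W)\,\mathrm{d}\mu(\beta),
\]
the last step being~\eqref{eq:7}. Since $\|\cdot\|_{\sigma_\mu}=\mathcal{R}_{\sigma_\mu}(|\cdot|)$ and $(t-X)_+\ge0$, Definition~\ref{def:Dominance} reduces $X\preccurlyeq^{\|\cdot\|_{\sigma_\mu}}Y$ to the inequalities $\mathcal{R}_{\sigma_\mu}\bigl((t-X)_+\bigr)\ge\mathcal{R}_{\sigma_\mu}\bigl((t-Y)_+\bigr)$ for every $t\in\mathbb{R}$. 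By the displayed identity it therefore suffices to prove the level-wise inequalities $(\mathcal{R}_\sigma)_\beta\bigl((t-X)_+\bigr)\ge(\mathcal{R}_\sigma)_\beta\bigl((t-Y)_+\bigr)$ for each $\beta\in[0,1)$ and then integrate against $\mu$.

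For the level-wise step I would use the defining infimum~\eqref{eq:3} together with the elementary identity $\bigl((t-X)_+-s\bigr)_+=\bigl((t-s)-X\bigr)_+$, valid for $s\ge0$, and the fact that for a non-negative argument the minimising shift in~\eqref{eq:3} is $F_{(t-X)_+}^{-1}(u_\beta)\ge0$ (Theorem~\ref{thm:Spectral}). Hence
\[
  (\mathcal{R}_\sigma)_\beta\bigl((t-X)_+\bigr)=\inf_{s\ge0}\ s+\frac{1}{1-\beta}\mathcal{R}_\sigma\bigl(((t-s)-X)_+\bigr).
\]
The hypothesis $X\preccurlyeq^{\|\cdot\|_\sigma}Y$ reads $\mathcal{R}_\sigma\bigl((r-X)_+\bigr)\ge\mathcal{R}_\sigma\bigl((r-Y)_+\bigr)$ for every $r\in\mathbb{R}$; applying it with $r=t-s$ shows that the objective for $X$ dominates that for $Y$ pointwise in $s\ge0$, and passing to the infimum yields the level-wise inequality. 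Integrating against $\mu$ and invoking the displayed formula for $\mathcal{R}_{\sigma_\mu}$ delivers $X\preccurlyeq^{\|\cdot\|_{\sigma_\mu}}Y$.

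I expect the main obstacle to be the first paragraph: pinning down the averaging identity~\eqref{eq:33} rigorously — correctly reading the upper limit $u_\beta$ through the quantile correspondence $u=u_\beta$, handling the change of variables between $u$ and $\beta$, and confirming integrability of $\frac1{1-\beta}$ against $\mu$ so that $\sigma_\mu$ is genuinely a spectral function. Once this representation is secured, the transfer of the dominance relation is essentially the monotonicity of the order-preserving functional $g\mapsto\inf_{s\ge0}\,s+\frac{1}{1-\beta}g(t-s)$ evaluated at the loss profile $g_X(r)\coloneqq\mathcal{R}_\sigma\bigl((r-X)_+\bigr)$, which is routine.
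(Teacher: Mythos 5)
Your proof is correct, and its engine is the same as the paper's: the observation that $\sigma_\mu=\int_0^1\sigma_\beta\,\mu(\mathrm d\beta)$, i.e.\ that the new spectrum is the $\mu$\nobreakdash-mixture of the truncated spectra of Theorem~\ref{thm:Spectral}, followed by an interchange of the order of integration. Where you diverge is in how the $\beta$\nobreakdash-indexed family of inequalities is produced and consumed. The paper invokes Theorem~\ref{thm:3} in both directions: the hypothesis gives $\mathcal R_{\sigma_\beta}(-X)\ge\mathcal R_{\sigma_\beta}(-Y)$ for all $\beta$, these are integrated against (a truncation of) $\mu$ and, after Fubini, recognized as the higher order inequalities for $\sigma_\mu$, whence Theorem~\ref{thm:3} yields the conclusion. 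You instead stay entirely on the primal side of Definition~\ref{def:Dominance}: you derive the level-wise inequalities $(\mathcal R_\sigma)_\beta\bigl((t-X)_+\bigr)\ge(\mathcal R_\sigma)_\beta\bigl((t-Y)_+\bigr)$ directly from the infimum representation, using the shift identity $\bigl((t-X)_+-s\bigr)_+=\bigl((t-s)-X\bigr)_+$ for $s\ge0$ and the fact (from Theorem~\ref{thm:Spectral}) that the minimizer for a non-negative argument is non-negative, so that the infimum may be restricted to $s\ge0$; integrating against $\mu$ then lands exactly on the defining condition~\eqref{eq:domain} for $\preccurlyeq^{\|\cdot\|_{\sigma_\mu}}$. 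Your route buys two things: it does not rely on the equivalence of Theorem~\ref{thm:3}, so it is self-contained modulo Theorem~\ref{thm:Spectral}, and the Fubini step is applied only to non-negative integrands $\sigma_\beta(u)F^{-1}_{(t-X)_+}(u)$, so Tonelli settles the interchange with no integrability caveats --- a point the paper's proof glosses over when interchanging integrals involving $F_{-X}^{-1}$. The one item you flag as delicate, reading the upper limit $u_\beta$ in~\eqref{eq:33} as integration over $\{\beta\colon u_\beta\le u\}$, is indeed the intended meaning (the remark following the proposition performs exactly this interchange), and your Galois-correspondence argument handles it correctly; note only that $\int_{\{\beta\colon u_\beta\le u\}}\frac{\mu(\mathrm d\beta)}{1-\beta}$ is automatically finite for $u<1$ because that set is contained in $\bigl[0,\int_0^u\sigma(p)\,\mathrm dp\bigr]$, which is bounded away from~$1$.
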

\begin{remark}
  The function $\sigma_\mu$ in~\eqref{eq:33} is indeed a spectral function.
  It is positive, as $\mu$ is a positive measure (thus~\ref{item:1} in Definition~\ref{def:Spectral}).
  The function is non-decreasing, as $u_\beta$ is non-decreasing for $\beta$ increasing.
  Finally, the function $\sigma_\mu$ is a density: indeed, it holds that
  \[	\int_0^1\sigma_\mu(u)\,\mathrm du
    = \int_0^1\sigma(u)\cdot \int_0^{u_\beta} \frac{\mu(\mathrm d\beta)}{1-\beta}\mathrm du
    = \int_0^1\int_{\beta_u}^1 \sigma(u)\,\mathrm du\ \frac{\mu(\mathrm d\beta)}{1-\beta}
    = \int_0^1 \mu(\mathrm d\beta) =1
  \]
  by integration by parts, where we have used the definition of $u_\beta$ in~\eqref{eq:Quantile}.
\end{remark}
\begin{proof}[Proof of Proposition~\ref{prop:Comp}]
  Since $x\preccurlyeq^{\|\cdot\|_\sigma}Y$, it holds with Theorem~\ref{thm:3} that $\mathcal{R}_{\sigma_\beta}(-X)\ge \mathcal{R}_{\sigma_\beta}(-Y)$ for all $\beta\in(0,1)$, where~$\sigma_\beta$ is defined in~\eqref{eq:sigma}.
  By the characterization~\eqref{eq:7}, this is
  \[	\int_{u_\beta}^1\frac{\sigma(u)}{1-\beta}F_{-X}^{-1}(u)\,\mathrm{d}u\ge\int_{u_\beta}^1\frac{\sigma(u)}{1-\beta}F_{-Y}^{-1}(u)\,\mathrm{d}u,\qquad\beta\in(0,1).
  \]
  Integrating the latter expression with respect to $\mu(\mathrm{d}\beta)$ establishes the inequality
  \[	\int_\beta^1\int_{u_{\beta^\prime}}^1\frac{\sigma(u)}{1-\beta^{\prime}}F_{-X}^{-1}(u)\,\mathrm{d}u\,\mu(\mathrm{d}\beta^{\prime})\ge \int_\beta^1\int_{u_{\beta^\prime}}^1\frac{\sigma(u)}{1-\beta^\prime}F_{-Y}^{-1}(u)\,\mathrm{d}u\,\mu(\mathrm{d}\beta^\prime),\qquad\beta\in(0,1).
  \]
  Interchanging the order of integration together with~\eqref{eq:18} gives that
  \[	\int_{u_\beta}^1\int_\beta^{\beta_u}\frac{\sigma(u)}{1-\beta^{\prime}}\mu(\mathrm{d}\beta^\prime)F_{-X}^{-1}(u)\,\mathrm{d}u
    \ge \int_{u_{\beta}}^{1}\int_\beta^{\beta_u}\frac{\sigma(u)}{1-\beta^\prime}\,\mu(\mathrm{d}\beta)F_{-Y}^{-1}(u)\,\mathrm{d}u, \qquad\beta\in(0,1),
  \]
  which in turn is
  \[	\int_{u_\beta}^1\sigma_\mu(u)F_{-X}^{-1}(u)\,\mathrm{d}u
    \ge \int_{u_\beta}^{1}\sigma_{\mu}(u)F_{-Y}^{-1}(u)\,\mathrm{d}u, \qquad\beta\in(0,1).
  \]
  This is the assertion.
\end{proof}

\section{Example: the expectile}\label{sec:Expectile}
The expectile risk measure, originally introduced by \citet{NeweyPowell}, has recently gained additional interest (cf.\ \citet{MalandiiUryasev}, \citet{Balbas2023} or \citet{SteinwartFarooq2018} for conditional regressions). A main reason for the additional interest in this risk measure is because it is the only elicitable risk functional (cf.\ \citet{Ziegel2014}).

As Proposition~\ref{prop:1} indicates, the higher order risk measure can be based on the dual norm. For this reason, the following section establishes the dual norm of expectiles first, as it is crucial in understanding its regret function in the risk quadrangle.
Next, we provide an explicit characterization of the higher order expectiles, that is, the higher order risk measure based on the expectile risk measure.
\medskip

The expectile is defined as a \emph{minimizer}. Its Kusuoka representation is central in elaborating the corresponding higher order risk functional.
\begin{definition}
  For $\alpha\in(0,1)$, the expectile is
  \begin{equation}\label{eq:34}
  e_\alpha(Y)=\argmin_{x\in\mathbb{R}}\E\ell_\alpha(Y-x),
\end{equation}
  where
  \begin{equation}\label{eq:35}
  \ell_{\alpha}(x)=\begin{cases}
  \quad \alpha\,x^2 & \text{if }x\ge0,\\
  (1-\alpha)x^2 & \text{else}
  \end{cases}
  \end{equation}
  is the asymmetric loss, or quadratic error function.
\end{definition}
The expectile satisfies the first order condition
\begin{equation}\label{eq:36}
	(1-\alpha)\E(x-Y)_+=\alpha\E(Y-x)_+,
\end{equation}
and $e_\alpha(\cdot)$ is a risk measure for $\alpha\in[\nicefrac12,1]$.
We mention that condition~\eqref{eq:36} provides a definition for $Y\in L^1$, it is thus more general than~\eqref{eq:34}, which requires $Y\in L^2$.
The Kusuoka representation of the expectile (cf.\ \citet[Proposition~9]{Bellini2014}) is given by
\begin{equation}\label{eq:37}
	e_\alpha(Y)=\max_{\gamma\in[0,1-\eta]}(1-\gamma)\cdot\E Y+\gamma\cdot\AVaR_{1-\frac{\gamma}{1-\gamma}\frac{\eta}{1-\eta}}(Y),
\end{equation}
where $\eta=\frac{1-\alpha}\alpha$, so that the risk level in~\eqref{eq:37} is $1-\frac{\gamma}{1-\gamma}\frac{\eta}{1-\eta}=\frac{\alpha(2-\gamma)-1}{(2\alpha-1)(1-\gamma)}$.
Involving spectral risk measures, the expectile can be recast as
\[e_\alpha(Y)=\sup\left\{ \mathcal{R}_{\sigma_\gamma}(Y)\colon\sigma_\gamma\in\mathcal{S}\right\} ,
\]
where $\mathcal{S}=\left\{ \sigma_\gamma\colon\gamma\in\left[0,1-\eta\right]\right\} $ collects the spectral functions
\[	s_\gamma(u)= \begin{cases}
		1-\gamma & \text{if }u\le1-\frac{\gamma}{1-\gamma}\frac{\eta}{1-\eta},\\
			\frac{1-\gamma}{\eta} & \text{else}.
\end{cases}
\]

The higher order expectile can be described by involving its dual norm (cf.~\eqref{eq:1.5}), as well as its Kusuoka representation (cf.\ Corollary~\ref{cor:mu}).
The following two (sub)sections elaborate these possibilities for the expectile.

\subsection{The dual norm of expectiles}
The higher order expectile can be described with the dual representation~\eqref{eq:6}, for which the dual norm of the expectile is necessary.

By the characterization of the loss function~\eqref{eq:36} it holds that $e_\alpha(Y)$ is well-defined for $Y\in L^1(P)$. This is enough to conclude that $\E|Y|\le C_\alpha\cdot e_\alpha(|Y|)$ for some constant $C_\alpha>0$ (\citet[Corollary~2.16]{LakshmananPichler2023} elaborate the tight bound $C_\alpha=\frac\alpha{1-\alpha}$).
It follows that $\mathcal Y^\ast= L^\infty$, so that $\|Z\|_\infty$ is well-defined for $Z\in \mathcal{Y}^*$.
\medskip

The following result provides the dual norm of the expectile explicitly.
\begin{proposition}[Dual norm of the expectile]\label{prop:10}
  For $\alpha\ge\nicefrac{1}{2}$, the dual norm is
  \begin{equation}\label{eq:38}
  \|Z\|_\alpha^*\coloneqq\sup\left\{ \E YZ\colon e_\alpha(|Y|)\le1\right\}
  \end{equation}
  (cf.~\eqref{eq:5}) . It holds that
  \begin{equation}\label{eq:39}
 				\|Z\|_\alpha^\ast  =\sup_{\beta\in(0,1)}(1-\beta)\cdot\AVaR_{\beta}(|Z|)+\beta\frac{1-\alpha}{\alpha}\|Z\|_\infty.
  \end{equation}
\end{proposition}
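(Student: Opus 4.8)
The plan is to pass to quantile functions and solve the resulting semi-infinite linear program by exhibiting matching primal and dual certificates, so that strong duality holds constructively. First I would reduce to $Z\ge0$: since the constraint $e_\alpha(|Y|)\le1$ depends only on $|Y|$, aligning signs ($Y=\sign(Z)\,|Y|$) gives $\|Z\|_\alpha^\ast=\sup\{\E\,U\,|Z|\colon U\ge0,\ e_\alpha(U)\le1\}$, and by the Hardy--Littlewood rearrangement inequality the optimal $U$ is comonotone with $|Z|$. Writing $\zeta\coloneqq F_{|Z|}^{-1}$ and $q\coloneqq F_U^{-1}$ (so $\E\,U|Z|=\int_0^1 q\,\zeta$), the dual norm becomes the linear program of maximising $\int_0^1 q(u)\zeta(u)\,\mathrm du$ over nonnegative nondecreasing $q$ subject to $e_\alpha(q)\le1$. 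Using the spectral form of the Kusuoka representation~\eqref{eq:37}, the single constraint unfolds into the family $\int_0^1 s_\gamma(u)q(u)\,\mathrm du\le1$, $\gamma\in[0,1-\eta]$, with $\eta=\tfrac{1-\alpha}\alpha$, each $s_\gamma$ being the two-step density with top-to-bottom ratio $1/\eta$. Throughout I write $g(\beta)\coloneqq(1-\beta)\AVaR_\beta(|Z|)+\beta\eta\|Z\|_\infty=\int_\beta^1\zeta+\beta\eta\,\zeta(1)$.

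For the lower bound I would, for each $\beta\in(0,1)$, test the explicit comonotone $U$ whose quantile equals $0$ on $[0,\beta)$, equals $1$ on $[\beta,1-\varepsilon)$, and equals $\beta\eta/\varepsilon$ on $[1-\varepsilon,1]$ (a thin spike of quantile-mass $\beta\eta$ at the top). A short computation shows $\int s_\gamma q=1$ for every $s_\gamma$ whose kink lies at or above $\beta$, and $<1$ otherwise; hence $e_\alpha(q)=1$ and $U$ is feasible. Letting $\varepsilon\downarrow0$, the objective tends to $\int_\beta^1\zeta+\beta\eta\,\zeta(1)=g(\beta)$, so $\|Z\|_\alpha^\ast\ge\sup_\beta g(\beta)$.

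For the reverse inequality I would produce a dual certificate. Let $\beta^\ast$ satisfy $\zeta(\beta^\ast)=\eta\,\zeta(1)$ (the stationarity condition $g'(\beta^\ast)=0$) and set $\psi\coloneqq\zeta(\beta^\ast)$ on $[0,\beta^\ast)$ and $\psi\coloneqq\zeta$ on $[\beta^\ast,1]$. One checks $\int_u^1\psi\ge\int_u^1\zeta$ for all $u$ (with equality for $u\ge\beta^\ast$, since on $[u,\beta^\ast]$ one has $\zeta\le\zeta(\beta^\ast)$), that $\psi$ is nondecreasing with $\psi(1)=\tfrac1\eta\psi(0)$, and that $\int_0^1\psi=g(\beta^\ast)$. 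Because $q$ is nonnegative and nondecreasing, $\int_u^1(\psi-\zeta)\ge0$ yields $\int q\zeta\le\int q\psi$ by integration by parts; normalising, $\bar\psi\coloneqq\psi/\!\int\!\psi$ is a spectral function with $\esssup\bar\psi\le\tfrac1\eta\essinf\bar\psi$, so the comonotone density it induces is an admissible dual element of $e_\alpha$ and $\int q\bar\psi=\mathcal R_{\bar\psi}(U)\le e_\alpha(U)\le1$. Chaining, $\int q\zeta\le\bigl(\int\psi\bigr)\,e_\alpha(U)\le g(\beta^\ast)=\sup_\beta g(\beta)$, giving ``$\le$''.

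The step I expect to be the main obstacle is the passage $\mathcal R_{\bar\psi}(U)\le e_\alpha(U)$, i.e.\ showing that the normalised certificate $\bar\psi$ is genuinely an admissible dual element of the expectile. This amounts to the characterisation that every nondecreasing density with max-to-min ratio at most $\alpha/(1-\alpha)=1/\eta$ lies in the closed convex hull of the generating spectral functions $s_\gamma$; I would establish it by explicitly recovering the mixing measure $\rho$ from $\bar\psi'$ (with boundary atoms at the two extreme kinks), or by invoking the ratio characterisation of the expectile's dual set. Secondary technical points are the rigorous $\varepsilon\downarrow0$ ``spike'' limit and the treatment of atoms of $Z$, where $\zeta$ is discontinuous and $\beta^\ast$ must be read off from the generalised inverse; these I would handle with the usual one-sided quantile conventions and a routine approximation.
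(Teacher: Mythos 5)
Your proposal is correct, and while the lower bound coincides with the paper's, your converse follows a genuinely different route. For the lower bound both arguments test the same family of feasible variables: your ``spike'' quantile ($0$ on $[0,\beta)$, $1$ on $[\beta,1-\varepsilon)$, $\beta\eta/\varepsilon$ on top) is exactly the paper's three-level variable $\tilde Y_{B,G}$ with $P(B)=\beta$, $P(G)=1-\varepsilon$, written in quantile coordinates; only a cosmetic slip remains (with your normalisation one gets $e_\alpha(q)=1-\tfrac{1-\gamma}{\eta}\varepsilon<1$ rather than $=1$, which is still feasible and immaterial in the limit). For the upper bound, however, the paper sets up a Lagrangian for the primal problem and argues via first-order conditions that the optimiser must itself be of the form $\tilde Y_{B,G}$ --- an argument that is somewhat informal (``provided that the derivative exists'', plateaus of $Z$, etc.). You instead exhibit an explicit majorant $\psi$ of $\zeta=F_{|Z|}^{-1}$ (constant at level $\eta\,\zeta(1)$ below the stationary point $\beta^\ast$, equal to $\zeta$ above), note that $\psi\ge\zeta$ pointwise so $\int q\zeta\le\int q\psi$ already without integration by parts, and reduce the bound to $\mathcal R_{\bar\psi}(U)\le e_\alpha(U)$ for the normalised certificate $\bar\psi$. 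That last step is indeed the crux, but it is precisely the known ratio characterisation of the expectile's supporting set ($\esssup \bar Z\le\tfrac{\alpha}{1-\alpha}\essinf \bar Z$, Bellini et al., which the paper already cites for the Kusuoka representation~\eqref{eq:37}), so it can be quoted rather than re-derived. The trade-off: the paper's converse is self-contained but rests on a nonrigorous KKT step, whereas yours is a clean verification by a dual certificate at the price of importing one external characterisation; your version also localises all measure-theoretic care into the routine issues you already flag (the $\varepsilon\downarrow0$ limit, atoms of $|Z|$, and the boundary case where $\zeta(0)\ge\eta\,\zeta(1)$ forces $\beta^\ast=0$ and $\psi=\zeta$). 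I see no gap.
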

Notably, the norm $\|\cdot\|_\alpha^*$ is \emph{not} a risk measure itself, and~\eqref{eq:39} is \emph{not} a Kusuoka representation; indeed, the total weight in the representation~\eqref{eq:39} is
  \[	(1-\beta) + \beta\frac{1-\alpha}\alpha < 1
  \]
  for $\alpha\in (\nicefrac12,1]$.
\begin{proof}[Proof of Proposition~\ref{prop:10}]
  We may assume that $Z\ge 0$, as otherwise we may consider $\sign(Z)\cdot Y$
  instead of $Y$. For arbitrary sets $B$ and $G$  with  $B\subset G$ and $P(G)<1$ define the random variable
  \begin{equation}\label{eq:40}
    \tilde{Y}_{B,G}(\omega)\coloneqq\begin{cases}
    0 & \text{if }\omega\in B,\\
    1 & \text{if }\omega\in G\setminus B,\text{ and}\\
    \frac{1-\alpha}{\alpha}\cdot\frac{P(B)}{1-P(G)}+1 & \text{else}.
    \end{cases}
  \end{equation}
  Note, that
  \[	(1-\alpha)\cdot P(B)(1-0)= \alpha\cdot\bigl(1-P(G)\bigr)\left(\frac{(1-\alpha)P(B)}{\alpha(1-P(G))}+1-1\right),
  \]
  and hence $e_\alpha(\tilde{Y}_{B,G})=1$ by the defining equation~\eqref{eq:36}. It follows with~\eqref{eq:38} that
  \[	\|Z\|_\alpha^*\ge\E Z\,Y_{B,G}.
  \]
  As $B\subset G$ are arbitrary, we conclude in particular that
  \[	\|Z\|_\alpha^*\ge \bigl((1-P(B)\bigr)\cdot\AVaR_{P(B)}(Z)+P(B)\frac{1-\alpha}{\alpha}\cdot\AVaR_{P(G)}(Z),
  \]
  because the random variables
  \[	\tilde Y_{B,G}= \bigl(1-P(B)\bigr)\cdot\frac{1}{1-P(B)}\one_{[P(B),1]}(U)+P(B)\frac{1-\alpha}{\alpha}\cdot\frac{1}{1-P(G)}\one_{[P(G),1]}(U)
  \]
  satisfy all conditions from above for any uniform variable $U$. Now let $P(G)\to1$ and by denoting $\beta=P(B)$ it follows that
  \[	\|Z\|_\alpha^*\ge \sup_{\beta\in(0,1)}(1-\beta)\cdot \AVaR_\beta(Z)+\beta\frac{1-\alpha}{\alpha}\esssup Z,
  \]
  as $\AVaR_{\gamma}(Z)\to\esssup Z$ for $\gamma\to1$.

  As for the converse observe that we may assume $e_\alpha(Y)=1$ for the optimal random variable in~\eqref{eq:38}. Consider the Lagrangian
  \begin{equation}\label{eq:41}
  L(Y;\lambda,\mu)\coloneqq\E ZY-\lambda\bigl((1-\alpha)\E(1-Y)_+-\alpha\E(Y-1)_{+}\bigr)+\E\mu Y,
  \end{equation}
	where the Lagrangian multiplier $\lambda\in\mathbb{R}$ is associated with the equality constraint $e_\alpha(Y)=1$, i.e.,~\eqref{eq:36}, and the measurable variable $\mu\in L^1$, $\mu\ge0$, is associated with the inequality constraint $Y\ge 0$.
  Provided That the derivative exists, the first order conditions are
  \begin{align}
  0 & =\frac\partial{\partial Y}L(Y;\lambda,\mu),\nonumber \\
  \shortintertext{or}Z & =\lambda\bigl(-(1-\alpha)\one_{\{Y<1\}}-\alpha\,\one_{\{Y>1\}}\bigr)-\mu\cdot\one_{\{Y=0\}}.\label{eq:42}
  \end{align}
  Now note that the left-hand side of~\eqref{eq:42} involves the \emph{variable}~$Z$, while the right-hand side only involves \emph{constants}, except on $\{Y=0\}$, where~$\mu$ is not necessarily constant.
  The first order conditions~\eqref{eq:42} thus hold true on plateaus of $Z$, if they coincide with $\{Y<1\}$ or $\{Y>1\}$; for $\{Y=0\}$, equation~\eqref{eq:42} is $\mu=-Z-\lambda(1-\alpha)$; for $\{Y=1\}$, the derivative of~\eqref{eq:41} does not exist or depends on the direction.

  It follows, that the optimal~$Y$ in~\eqref{eq:38} exactly is of form~\eqref{eq:40} and hence the assertion.
\end{proof}

\subsection{Higher order expectiles}
The Kusuoka representation~\eqref{eq:37} is the basis for the expectile’s higher order risk measure.
\begin{proposition}
  For $\beta\in(0,1)$, the higher order expectile is
  \begin{equation}\label{eq:43}
    \bigl(e_\alpha\bigr)_\beta(Y)=\max_{\gamma\in [0,1-\eta]} \begin{cases}
      \left(1-\frac{\gamma}{1-\beta}\right)\AVaR_{\frac{\beta}{1-\gamma}}(Y)+\frac{\gamma}{1-\beta}\AVaR_{1-\frac{\gamma}{1-\gamma}\frac{\eta}{1-\eta}}(Y) & \text{if }\frac{\gamma}{1-\beta}<1-\eta,\\
      \AVaR_{1-(1-\beta)(1-\tilde\alpha)}(Y) & \text{else},
    \end{cases}
  \end{equation}
	where $\eta=\frac{1-\alpha}\alpha$ (as above) and $\tilde \alpha\coloneqq1-\frac{\eta}{1-\gamma}$.
\end{proposition}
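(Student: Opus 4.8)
The plan is to read the higher order expectile directly off its Kusuoka representation~\eqref{eq:37} by means of Theorem~\ref{thm:Kusuoka}. Writing $\E(\cdot)=\AVaR_0(\cdot)$, each term in~\eqref{eq:37} is a two-point mixture of average values-at-risk associated with the measure $\mu_\gamma\coloneqq(1-\gamma)\,\delta_{0}+\gamma\,\delta_{\alpha_2(\gamma)}$, where $\alpha_2(\gamma)\coloneqq 1-\frac{\gamma}{1-\gamma}\frac{\eta}{1-\eta}$, so that $e_\alpha=\sup_{\gamma\in[0,1-\eta]}\mathcal R_{\mu_\gamma}$. Theorem~\ref{thm:Kusuoka} then permits interchanging the higher order operation with this supremum, giving $(e_\alpha)_\beta(Y)=\sup_{\gamma\in[0,1-\eta]}(\mathcal R_{\mu_\gamma})_\beta(Y)$; the right-hand side is exactly the outer maximum over $\gamma$ in~\eqref{eq:43} once each inner term is evaluated. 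This reduces the entire statement to a per-$\gamma$ application of Corollary~\ref{cor:AVaR} with $n=2$.

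For fixed $\gamma$ I would instantiate Corollary~\ref{cor:AVaR} with the data $\alpha_1=0$, $p_1=1-\gamma$ and $\alpha_2=\alpha_2(\gamma)$, $p_2=\gamma$. A short computation gives $1-\alpha_2=\frac{\gamma}{1-\gamma}\frac{\eta}{1-\eta}$, hence $\frac{p_2}{1-\alpha_2}=(1-\gamma)\frac{1-\eta}{\eta}$ and $\sum_{i}\frac{p_i}{1-\alpha_i}=\frac{1-\gamma}{\eta}$. In the non-collapsed regime $u_\beta<\alpha_2$ only the atom at $0$ lies below $u_\beta$, so the quantile equation $\beta=\sum_i p_i\frac{\max(0,u_\beta-\alpha_i)}{1-\alpha_i}$ reduces to $\beta=(1-\gamma)u_\beta$, i.e.\ $u_\beta=\frac{\beta}{1-\gamma}$. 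Substituting into~\eqref{eq:17-2} yields $p_0=(1-\gamma)\frac{1-u_\beta}{1-\beta}=\frac{1-\gamma-\beta}{1-\beta}=1-\frac{\gamma}{1-\beta}$ as the weight of $\AVaR_{u_\beta}(Y)=\AVaR_{\beta/(1-\gamma)}(Y)$, while the surviving atom contributes $\frac{p_2}{1-\beta}=\frac{\gamma}{1-\beta}$ in front of $\AVaR_{\alpha_2}(Y)$. This reproduces the first branch of~\eqref{eq:43} verbatim.

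It remains to pin down the case split and the collapsed formula. I would feed the two-point data into the collapse threshold~\eqref{eq:17-4}: using $1-\max_i\alpha_i=1-\alpha_2=\frac{\gamma}{1-\gamma}\frac{\eta}{1-\eta}$ together with $\sum_i\frac{p_i}{1-\alpha_i}=\frac{1-\gamma}{\eta}$, the threshold simplifies to $\beta\ge 1-\frac{\gamma}{1-\eta}$, which rearranges to $\frac{\gamma}{1-\beta}\ge 1-\eta$ -- precisely the second alternative in~\eqref{eq:43}. In that regime Corollary~\ref{cor:AVaR} collapses the mixture to $\AVaR_{1-(1-\tilde\alpha)(1-\beta)}(Y)$ with $\tilde\alpha=\frac{\sum_i\frac{p_i}{1-\alpha_i}\alpha_i}{\sum_i\frac{p_i}{1-\alpha_i}}$. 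Since $\alpha_1=0$ annihilates the first summand, the numerator is $\frac{p_2}{1-\alpha_2}\alpha_2=(1-\gamma)\frac{1-\eta}{\eta}\alpha_2$, and dividing by $\frac{1-\gamma}{\eta}$ leaves $\tilde\alpha=(1-\eta)\alpha_2=1-\frac{\eta}{1-\gamma}$, matching the claimed value.

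The routine part is this algebra; the one step deserving care is the very first, namely checking that~\eqref{eq:37} genuinely meets the hypotheses of Theorem~\ref{thm:Kusuoka}. The expectile is law invariant, the index set $[0,1-\eta]$ is compact, and the map $(t,\mu)\mapsto t+\mathcal R_\mu\bigl((Y-t)_+\bigr)$ is linear in $\mu$ and convex in $t$, so Sion's minimax theorem applies exactly as in the proof of Theorem~\ref{thm:Kusuoka}; the only mild subtlety is that the atom location $\alpha_2(\gamma)$ moves with $\gamma$, but this is harmless because Theorem~\ref{thm:Kusuoka} admits an arbitrary family $\mathcal M$ of measures on $[0,1]$. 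I would also record the boundary behavior as a consistency check rather than a separate case: $\gamma=0$ recovers $(\E)_\beta=\AVaR_\beta$, and $\gamma=1-\eta$ forces $\alpha_2=0$, so both endpoints reduce to harmless degenerate mixtures.
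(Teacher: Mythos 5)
Your proposal is correct and follows essentially the same route as the paper: read the higher order expectile off the Kusuoka representation~\eqref{eq:37} and apply Corollary~\ref{cor:AVaR} with the two-point measure $(1-\gamma)\delta_0+\gamma\,\delta_{\alpha_2(\gamma)}$ for each $\gamma$, and your algebra for $u_\beta$, $p_0$, the collapse threshold and $\tilde\alpha$ all matches. The only difference is that you make explicit, via Theorem~\ref{thm:Kusuoka}, the interchange of the higher order operation with the supremum over $\gamma$ — a step the paper's proof leaves implicit — which is a welcome bit of extra care rather than a deviation.
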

\begin{proof}
  The measure in the Kusuoka representation~\eqref{eq:37} is $\mu(\cdot)=(1-\gamma)\delta_{0}+\gamma\cdot\delta_{1-\frac{\gamma}{1-\gamma}\frac{\eta}{1-\eta}}$.
  To apply Corollary~\ref{cor:AVaR} we set $p_{1}\coloneqq1-\gamma$ and $p_2=\gamma$, the corresponding risk levels are $\alpha_1=0$ and $\alpha_2 = 1-\frac{\gamma}{1-\gamma}\frac{\eta}{1-\eta}$.
  The mixed risk level is $\tilde\alpha\coloneqq \frac{\alpha_1\frac{p_1}{1-\alpha_1}+\alpha_2\frac{p_2}{1-\alpha_2}}{\frac{p_1}{1-\alpha_1}+\frac{p_2}{1-\alpha_2}}=\frac{\alpha(2-\gamma)-1}{\alpha(1-\gamma)}=1-\frac{\eta}{1-\gamma}$.

  We distinguish the cases $\frac{\gamma}{1-\beta}<1-\eta$ and $\frac{\gamma}{1-\beta}<1-\eta$, which are equivalent to $u_\beta\lessgtr \alpha_2$, i.e., $1-\frac{1-\beta}{\frac{\gamma}{1-\alpha_1}+\frac{1-\gamma}{1-\alpha_2}}\lessgtr\alpha_2$ in view of~\eqref{eq:19}.
  In the first case, the critical equation~\eqref{eq:17} is $(1-\gamma)u_{\beta}=\beta$, while it is $(1-\gamma)u_\beta+\gamma\frac{u_\beta-\alpha_2}{1-\alpha_2}=\beta$ in the other case; the solutions thus are $u_\beta=\frac{\beta}{1-\gamma}$ and $u_\beta=\frac{\alpha(2-\beta-\gamma)-1+\beta}{\alpha(1-\gamma)}$.
  The corresponding weights $p_0$ (cf.~\eqref{eq:17} again) are $p_0=\frac{1-u_\beta}{1-\beta}(1-\gamma)$, or $p_0=\frac{1-u_\beta}{1-\beta}\left(\frac{1-\gamma}{1-0}+\frac{\gamma}{1-\alpha_2}\right)=1$.
  Finally, note that $u_\beta= 1-(1-\beta)(1-\tilde{\alpha})$.

  The assertion follows with~\eqref{eq:18} and~\eqref{eq:17-2} in Corollary~\ref{cor:AVaR}.
\end{proof}
The average value-at-risk is ‘closed under higher orders’, as its higher order variant is an average value-at-risk as well (cf.~\eqref{eq:14}).
This is not the case for the expectile, as the first term in~\eqref{eq:43} is not an expectation as in the genuine Kusuoka representation~\eqref{eq:37}.
Repeating the construction and passing to higher order expectiles leads to more complicated risk measures.

\section{Summary}\label{sec:Summary}
Higher order risk measures naturally integrate with stochastic optimization, as they are stochastic optimization problems themselves.
This paper presents and derives explicit forms of higher order risk measures, specifically for spectral risk measures. These risk measures constitute the central building block of general law invariant risk measures.

Extending these results result it is demonstrated that stochastic dominance relations can be characterized by employing higher order risk measures, and vice versa.
We provide a verification theorem, which makes higher stochastic dominance relations accisible to numerical compuations.

The results are exemplified for expectiles, a specific risk measure with unique properties.

\bibliographystyle{abbrvnat}
\bibliography{LiteraturAlois}

\end{document}